\documentclass[11pt]{article}

\usepackage[T1]{fontenc}

\usepackage[margin=1in]{geometry}
\usepackage{amsmath,amssymb,amsthm}
\usepackage{mathtools}
\usepackage{xcolor}
\usepackage{enumitem}

\usepackage{tikz}
\usetikzlibrary{calc,decorations.pathreplacing,arrows.meta}

\usepackage[ruled,linesnumbered]{algorithm2e}
\newcommand{\RRmark}[1]{\tikz[remember picture,overlay]{\coordinate (#1);}}
\SetKwFor{RRFor}{for}{do}{end\RRmark{repair-rr-bottom}}

\usepackage{aliascnt}

\usepackage{hyperref}
\usepackage[capitalise]{cleveref}

\newcommand{\eps}{\varepsilon}
\newcommand{\alloc}{A}

\newcommand{\Core}{\textsc{Core}}
\newcommand{\Repair}{\textsc{Repair}}
\newcommand{\Dominate}{\textsc{Dominate}}
\newcommand{\Divide}{\textsc{Divide}}
\newcommand{\Dom}{\mathrm{Dom}}
\newcommand{\Permute}{\textsc{Permute}}
\newcommand{\cat}{\mathbin{+\!\!+}}

\theoremstyle{plain}

\newtheorem{theorem}{Theorem}

\newaliascnt{proposition}{theorem}

\aliascntresetthe{proposition}

\newaliascnt{lemma}{theorem}
\newtheorem{lemma}[lemma]{Lemma}
\aliascntresetthe{lemma}

\newaliascnt{claim}{theorem}
\newtheorem{claim}[claim]{Claim}
\aliascntresetthe{claim}
\crefname{claim}{claim}{claims}
\Crefname{claim}{Claim}{Claims}

\newaliascnt{corollary}{theorem}

\aliascntresetthe{corollary}

\newaliascnt{conjecture}{theorem}

\aliascntresetthe{conjecture}

\theoremstyle{definition}
\newtheorem{definition}{Definition}

\newaliascnt{remark}{theorem}

\aliascntresetthe{remark}

\theoremstyle{plain}

\usepackage{thmtools}
\usepackage{thm-restate}

\title{Cutting a Cake For Human Consumption:\\An Elementary $n$-agent Cake Cutting Protocol}
\author{Alexandros Psomas \\ Purdue University}
\date{}

\begin{document}
\maketitle

\begin{abstract}
We give a bounded envy-free cake-cutting protocol for any number $n$ of agents in the Robertson--Webb model, with a short, self-contained, and elementary correctness proof. The protocol uses $2^{O(n \log n)}$ queries. Most ingredients are already known and well understood in the fair division literature; our main contribution is a simpler organization of these ideas into a protocol whose complete description and proof fit in a few pages.
\end{abstract}

\section{Introduction}\label{sec:intro}

There is an infinitely divisible resource $C$, the ``cake,'' denoted by the interval $[0,1]$. Our goal is to allocate the cake among a set $N$ of $n$ agents. An allocation $\alloc = (\alloc_1, \dots, \alloc_n)$ of the cake to the agents consists of $n$ pieces, where $\alloc_i$ is the piece allocated to agent $i$, such that $\alloc_i \cap \alloc_j = \emptyset$ for all pairs of distinct agents $i,j$. A piece here refers to a finite union of disjoint intervals of $[0,1]$.

Every agent $i \in N$ has a nonnegative integrable valuation density $v_i$, normalized so that $\int_{x=0}^1 v_i(x) dx = 1$. For any piece $P$, her value is $V_i(P) = \int_P v_i(x) dx$. Valuations are: (i) additive: for a set of disjoint pieces $P_1, \dots, P_m$, $V_i (\cup_{j=1}^m P_j) = \sum_{j=1}^m V_i(P_j)$, (ii) non-atomic: for all $0 \leq x \leq y \leq 1, \lambda \in [0,1], \exists z \in [x, y]$ such that $V_i([x, z]) = \lambda \cdot V_i([x,y])$.

Our goal is to produce a complete, envy-free allocation. An allocation $\alloc$ of a cake $C$ is \emph{complete} if $\cup_{i \in N} \alloc_i = C$. In our subroutines, we use partial allocations, which need not be complete. For a partial allocation $\alloc$ of a cake $C$, the \emph{residue} $R$ is the unallocated part, i.e., $R = C \setminus \bigcup_{i \in N} \alloc_i$. A (complete or partial) allocation $\alloc$ is \emph{envy-free} if for all distinct $i,j \in N$, $V_i(\alloc_i) \geq V_i(\alloc_j)$. Envy-freeness compares only the allocated pieces; it imposes no condition on how agents value the residue. We work in the standard Robertson--Webb model~\cite{robertson1998cake,woeginger2007complexity}. This model allows for two types of queries: (1) $\textsc{Eval}_i(I)$: Given an agent $i$ and an interval $I \subseteq [0,1]$, this query returns $V_i(I)$; (2) $\textsc{Cut}_i(x, v)$: Given an agent $i$, a point $x \in [0,1]$, and a value $v \in [0,V_i([x,1])]$, this query returns the smallest point $x' \in [x, 1]$ such that $V_i([x, x']) = v$. 

Readers familiar with cake cutting may proceed directly to~\Cref{sec:conventions}.

\subsection{Related Work}

Cut-and-choose gives an envy-free allocation for $n=2$ agents, and the Selfridge--Conway protocol gives an envy-free allocation for $n=3$ agents~\cite{robertson1998cake}. Brams and Taylor~\cite{brams1995envy} gave an envy-free protocol for any number of agents, but the number of queries it makes is not bounded by any function of $n$ (it depends on the valuations). Whether a \emph{bounded} protocol exists remained open for decades; Procaccia proved an $\Omega(n^2)$ lower bound~\cite{procaccia2009thou}. Aziz and Mackenzie presented bounded protocols for four agents~\cite{aziz2016discretefour} (later improved in~\cite{amanatidis2018improved}) and then for any number of agents~\cite{aziz2016discrete}, with $n^{n^{n^{n^{n^n}}}}$ queries. Dehghani et al.~\cite{dehghani2018envy} adapted the latter to chores.

\paragraph{A personal, and biased, perspective on cake cutting.}

The following discussion combines the published record with personal recollections and my own (necessarily biased) perspective. I will therefore write in the first person.

I am aware of two more works on the $n$-agent problem. Sokolov~\cite{sokolov2023exponential} states, in the abstract of their work, that minor changes to the Aziz--Mackenzie protocol and an improved analysis give a bound of $n^{8n^2(1+o(1))}$ queries; the manuscript is written in Russian (and therefore I have not been able to verify its contents)\footnote{Furthermore, since most ideas in this work were produced with AI assistance, I cannot rule out that some of them originate in this manuscript.} and is, to the best of my knowledge, unpublished. More recently, in another (at the time of this writing) unpublished manuscript, Ye and Bai~\cite{ye2026cutting} give a new protocol with a bound of~$n^{O(1)}2^n$. Their construction uses the \textsc{Core} and \textsc{SubCore} machinery of Aziz and Mackenzie (with one step of \textsc{SubCore} removed). To the best of my knowledge, the only bounded envy-free protocol for $n$ agents that has appeared in a peer-reviewed venue is that of Aziz and Mackenzie~\cite{aziz2016discrete} (FOCS 2016; a research highlight with a simplified \textsc{Core} protocol appeared in CACM~\cite{aziz2020bounded}). I am also aware of a manuscript by Bechler and Bitikofer~\cite{bechler2017counterexample}, who gave a five-agent counterexample to an earlier version of Aziz and Mackenzie's \textsc{SubCore} protocol (as presented in arXiv v11 of~\cite{aziz2016discrete}) and identified an error in its correctness proof. 
The current arXiv version (v12, August 2017)~\cite{aziz2017arxiv} changes \textsc{SubCore} and introduces an imaginary-value construction for tie-breaking; the proof is revised accordingly, now appealing to this construction.\footnote{The counterexample is not mentioned in v12; I record it here as part of the history of the result.} As part of a project for my graduate course ``Algorithmic Economics,'' Gabriel Iskandar, a former undergraduate student at Purdue, also confirmed the error in the earlier version and wrote an independent (five-page) proof of correctness for the revised \textsc{Core}/\textsc{SubCore} subroutine~\cite{gabe2024}.

To be clear, I know of no error in the revised Aziz--Mackenzie protocol, and I am not claiming one. I am, however, claiming that the solution is not widely understood in the fair division community (I do not understand it myself). Given the importance of the problem, the error in the earlier version, and how easy it is now to generate proofs using AI, a proof accessible to independent verification is particularly important. Accessibility also matters for further progress: understanding an existing solution is a natural starting point for improving or simplifying it (and comparing a new result with prior work is expected if one intends to publish it!). In this way, treating the problem as ``solved'' has (as the limited work on envy-free cake cutting in the past decade suggests) shifted the community's attention to other problems, such as the existence of EFX allocations.

This work was prompted in part by an X thread in which Nisarg Shah~\cite{shah2026tweet} announced AI-assisted improvements to several problems in social choice, including the query complexity of envy-free cake cutting for $n$ agents; replies to the thread claimed further improvements. I believe that, for envy-free cake cutting in particular, the fair division community does not need a faster protocol as much as it needs one that can be understood. So I wanted instead to try to use these AI tools to produce a simple protocol whose complete proof I can read, check, and, ideally, teach (in, say, one lecture) in my graduate course. This work is the outcome of this effort.

\paragraph{Use of AI.} All the proofs in this paper were initially generated by large language models (Claude Fable 5.1 by Anthropic and ChatGPT-6 Astra by OpenAI), with guidance from the author. I read and checked every proof line by line, rewrote most of them, and take full responsibility for the correctness of the paper. My main contribution was imparting my own taste: I rejected several proofs, and even faster protocols, as ``too complicated,'' and chose among the alternatives the models produced. I also made the connections to the related literature; as noted above, however, I cannot rule out that an idea suggested by an LLM originates in work I have not read.

\subsection{This work}

We give an \emph{elementary} solution to finding an envy-free allocation for $n$ agents using a bounded number of queries. Our protocol requires $2^{O(n \log n)}$ Robertson--Webb queries; we give a self-contained correctness proof. 

\paragraph{Overview.}

\begin{figure}[th]
\centering
\begin{tikzpicture}[
  box/.style={draw,rounded corners=2pt,minimum height=2.3em,minimum width=5.4em,align=center,font=\small},
  arr/.style={-{Latex[length=2mm]},semithick}]
\node[box] (div)  at (0,0)        {\Divide};
\node[box] (dom)  at (3.1,0)      {\Dominate};
\node[box] (core) at (6.7,0.95)   {\Core};
\node[box] (rep)  at (6.7,-0.95)  {\Repair};
\node[box] (perm) at (10.3,0)     {\Permute};

\draw[arr] (div)  -- (dom);
\draw[arr] (dom.east) -- (core.west);
\draw[arr] (dom.east) -- (rep.west);
\draw[arr] (core.east) -- (perm.west);
\draw[arr] (rep.east)  -- (perm.west);
\draw[arr] ($(div.north)+(-0.4,0)$) .. controls +(-0.45,0.9) and +(0.45,0.9) .. ($(div.north)+(0.4,0)$);
\end{tikzpicture}
\caption{The protocols and their calls.}
\label{fig:calls}
\end{figure}

\Cref{fig:calls} shows the protocols and their calls: \Divide{} calls \Dominate{} and recurses, \Dominate{} calls \Core{} and \Repair{}, and both call \Permute{}. The high-level plan is as follows.

First, we reduce the problem of finding an envy-free allocation to the problem of splitting the agents into two sets, $E$ and $N \setminus E$ (both non-empty), such that every agent in $N \setminus E$ \emph{dominates} every agent in $E$ (see~\Cref{sec: reduce to domination} for the definition); this split allows us to recurse, i.e., the original problem is now reduced to finding an envy-free allocation only among agents in $E$. The concept of domination is not new. Brams and Taylor called it ``irrevocable advantage''~\cite{brams1995envy} (in reference to the protocol of Selfridge and
Conway), and it has been used repeatedly in three-agent~\cite{brams1995envy}, four-agent~\cite{aziz2016discretefour,amanatidis2018improved}, and $n$-agent envy-free protocols~\cite{aziz2016discrete,ye2026cutting} (and even for chores~\cite{dehghani2018envy}). 

\Dominate{}, the protocol that finds a dominated group $E$, maintains an envy-free partial allocation and proceeds in a loop. Each iteration starts by calling \Core{} on the remaining cake. Given a designated cutter, \Core{} returns a partial envy-free allocation in which the cutter receives exactly $1/n$ of her value of its input. If adding this allocation produces a dominated group, we are done. Otherwise, we reassign the newly created pieces. Crucially, there exists a reassignment that introduces only a small amount of envy, while giving some agent $i^*$ an ``advantage'' over a particular bundle. We then call \Repair{}, which restores exact envy-freeness while preserving this advantage. Finally, repeated \Core{} calls, with $i^*$ as cutter, shrink her value of the residue until the advantage becomes domination. This increases the number of agents that $i^*$ dominates, without decreasing any other agent's domination count; we repeat this process until a dominated group is found (each iteration that does not return increases the sum of the agents' domination counts, which is bounded by $n^2$, so the loop must terminate). Creating new dominations by reassigning already allocated pieces and then shrinking the residue with repeated calls to \Core{} is not new either: in different forms, it appears in the four-agent protocols~\cite{aziz2016discretefour,amanatidis2018improved}, in the $n$-agent protocols~\cite{aziz2016discrete,ye2026cutting}, and in the chore protocol of Dehghani et al.~\cite{dehghani2018envy}.

\Core{} asks the designated cutter to divide the input cake into $n$ pieces of equal value, and then calls \Permute{}, a subroutine also used by \Repair{}. Given $n$ disjoint pieces, \Permute{} returns an envy-free partial allocation in which every agent receives a suffix of a different piece and at least one piece is allocated whole. Since the cutter values all $n$ input pieces equally, envy-freeness forces her to receive \emph{exactly} $1/n$ of her value of the input: she cannot receive less than the value of the piece allocated whole, or more than the value of the piece her suffix comes from. This equal-cutting-and-trimming structure also appears in earlier \Core{} protocols~\cite{aziz2016discretefour,aziz2016discrete,amanatidis2018improved}.

\Permute{} is a brute-force search. For every order of the agents and every order of the pieces, it constructs a candidate allocation by processing the agents in that order: each agent starts with her assigned piece whole, and the piece is trimmed just enough that no earlier agent envies it. The protocol returns the first candidate that is envy-free. This search over agent and piece orders is due to Dehghani et al.~\cite{dehghani2018envy}, who introduced it for chores; Aziz and Mackenzie later used it to simplify their \Core{} protocol~\cite{aziz2020bounded}. The main task is to prove that some guess succeeds. For every $\delta>0$, a variant of envy-cycle elimination~\cite{lipton2004approximately} constructs a $\delta$-envy-free allocation containing the candidate of some guess. If every candidate failed, a compactness argument would give a common positive lower bound on the envy of all such allocations; choosing $\delta$ smaller than this bound gives a contradiction.

Finally, \Repair{} receives a partial allocation in which every agent's envy is at most $\eps$ times her value of the residue. It restores exact envy-freeness by adding small pieces of residue to the old bundles and possibly reassigning the resulting bundles (while keeping every old bundle intact). The idea is to put extra cake before each old bundle and call \Permute{}, giving it enough cake to trim without reaching an old bundle. To construct these additions, we partition the residue into (at least $2n^2$ and at most $1/(2\eps)$) cells that every agent values at most $3n\eps$ times her value of the residue, and run round robin for $n-1$ rounds: each agent, in a fixed order, picks her favorite available cell. We place each agent's chosen cells before her old bundle, in the order she picked them, and call \Permute{}. \Permute{} guarantees that the output is envy-free; the main task is to show that no old bundle is cut. We prove this using basic properties of round-robin and a counting argument. Keeping the old bundles intact, together with the small value of the additions, preserves the advantage created by \Dominate{}'s reassignment. Adding residue before old bundles and then trimming while preserving those bundles also appears in the construction of Ye and Bai~\cite{ye2026cutting}.

\paragraph{What is new?} Most of the underlying ideas have precedents, as discussed above. The contributions of this work are (i) their organization into subroutines with simpler interfaces and guarantees; (ii) the round-robin implementation of \Repair{}; and (iii) the direct proof that \Permute{} succeeds on arbitrary input pieces. Together, these give a bounded envy-free protocol with a (relatively) short, self-contained correctness proof.

\subsection{Conventions}\label{sec:conventions}

Here, we state the conventions used throughout the paper.

\paragraph{Pieces.} Intervals are half-open, $[a,b)$, except the interval ending at $1$ (which includes that endpoint). Therefore, cutting an interval at a point $x$ yields the disjoint intervals $[a,x)$ and $[x,b)$. Every piece comes with a list of its intervals in a specified order, which need not be their order on $[0,1]$; e.g., in \Core{} the order is left to right, while \Repair{} specifies the order of its pieces explicitly. Within each protocol call, we relabel the participating agents as $1,\dots,n$.

\paragraph{Prefixes, suffixes and prefix cuts.} 
A \emph{prefix} of a piece consists of its first $k-1$ intervals and an initial part $[a,x)$ of its $k$-th interval $[a,b)$, for some $k$ and some $x\in[a,b]$; a \emph{suffix} is what remains after removing a prefix (the whole piece and the empty piece are suffixes of every piece). The value of a prefix of a piece $P$ changes continuously with $x$, so for every $0\le v\le V_i(P)$ some prefix is worth exactly $v$ to agent $i$; a \emph{prefix cut} on $P$ for agent $i$ and value $v$ finds the smallest such prefix. 

\paragraph{Recorded values.}
Every piece our protocols handle is a union of intervals in the partition determined by $0$, $1$, and all previously returned \textsc{Cut} points. We record each agent's value of every such interval; subroutine and recursive calls inherit these records for all (participating) agents. Initially, every agent's value of $[0,1]$ is $1$. Whenever a new cut point splits an interval, one \textsc{Eval} query per participating agent gives the value of one part, and subtraction gives the value of the other. Therefore a prefix cut costs at most $n+1$ queries: one \textsc{Cut} query in the interval identified using recorded values, and at most $n$ \textsc{Eval} queries. Bundle and residue values are sums of recorded interval values, so all remaining operations require no additional queries.

\section{Reduction to Domination}\label{sec: reduce to domination}

In this section, we state our main protocol, \textsc{Divide}. We reduce the problem of finding an envy-free allocation to the problem of separating the agents into two sets, such that every agent in the first set \emph{dominates} every agent in the second set. Intuitively, if $i$ dominates $j$ in a partial allocation, then $i$ will not envy $j$ even if $j$ receives the entire residue.
Domination is formally defined as follows.

\begin{definition}[Domination]
In a partial allocation $X$ with residue $R$, we say that agent $i$ \emph{dominates} agent $j$ if $V_i(X_i) \geq V_i(X_j) + V_i(R)$.
\end{definition}

If partial allocations $X$ and $Y$ allocate disjoint parts of the cake, $X\oplus Y$ denotes the allocation that gives $X_i\cup Y_i$ to every agent $i$; if $X$ and $Y$ are envy-free among the same set of agents, so is $X\oplus Y$. For a partial allocation $X$ with residue $R$, a \emph{dominated group} of $(X,R)$ is a set of agents $E$ with $\emptyset\neq E\subsetneq N$ such that every agent in $N\setminus E$ dominates every agent in $E$.

Our main protocol (\Cref{algo:divide}) first calls $\Dominate$ to get a partial envy-free allocation where a group of agents $E$ (that is nonempty and a proper subset of $N$) is dominated by everyone not in this set, and then recursively divides the residue among agents in $E$.  This domination-based reduction is also used in the $n$-agent protocols of Aziz and Mackenzie~\cite{aziz2016discrete} and Ye and Bai~\cite{ye2026cutting}.

\begin{algorithm}[H]
\caption{$\Divide(N,C)$}\label{algo:divide}
\If{$|N|=1$}{give $C$ to the agent and return\;}
\If{$|N|=2$}{Run \textsc{Cut-And-Choose}: the first agent cuts $C$ into two pieces of equal value to her, the second chooses her favorite piece, and the first gets the other; return\;}
$(X,R,E) \gets \Dominate(N,C)$\;\label{divide:line:dominate call}
$Y \gets \Divide(E,R)$\; \label{divide:line:recurse}
Return $X\oplus Y$, with $Y_i=\emptyset$ for $i\notin E$\;
\end{algorithm}

Note that in recursive calls (i.e., Line~\ref{divide:line:recurse}), the cake may be a finite union of intervals, and the valuations need not be normalized, or might be zero; the guarantee for $\Dominate$ below applies to such inputs. We use the following lemma, which we prove in~\Cref{sec: reduce to core and repair}.

\begin{restatable}{lemma}{dominatemain}\label{lem: dominate}
    For a set $N$ of $n \geq 3$ agents and a cake $C$, $\Dominate(N,C)$ returns a partial envy-free allocation $X$ of $C$ and a set of agents $\emptyset \ne E \subsetneq N$, such that every agent $i \notin E$ dominates every agent $j \in E$. Furthermore, $\Dominate{}$ uses at most $32^n n^7(n!)^2$ Robertson--Webb queries.
\end{restatable}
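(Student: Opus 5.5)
The proof assumes guarantees for the two subroutines \Core{} and \Repair{}, which the overview describes and \Cref{sec: reduce to core and repair} states formally. For \Core{}, with a designated cutter $c$ and input cake $C'$, the output is a partial envy-free allocation in which $c$ receives exactly $V_c(C')/n$; it uses $O(n!^2\cdot\mathrm{poly}(n))$ queries. For \Repair{}, suppose the input allocation has residue $R$ and every agent's envy is at most $\eps V_i(R)$. Then the output is envy-free, keeps every old bundle intact inside some new bundle, and adds to each bundle pieces of residue worth at most $O(n^2\eps)V_i(R)$ to every agent $i$.

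\Dominate{} keeps an envy-free partial allocation $X$ with residue $R$, and tracks the potential $\Phi=\sum_i |\{j: i \text{ dominates } j\}|\le n^2$. Domination is monotone: adding an envy-free allocation of part of $R$ preserves it, and reassigning whole bundles with additions also preserves it, because $i$'s bundle only gains value while $j$'s bundle gains at most what $R$ loses. Hence $\Phi$ never decreases.

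Each iteration proceeds in the following steps.
\begin{enumerate}
\item Call \Core{} on $R$ with an arbitrary cutter and add its output to $X$. If a dominated group now exists, return it; the existence of one can be checked from the domination digraph without queries.
\item Otherwise, reassign the newly created \Core{} pieces. The cutter values the $n$ pieces equally, while some other agent $i^*$ strictly prefers some piece to another. Using a cycle/matching argument on the \Core{} output, choose a reassignment in which $i^*$ gains a strict advantage $\alpha>0$ over a bundle $X_j$ that $i^*$ did not previously dominate, with all envies small relative to the residue.
\item Call \Repair{}. This restores exact envy-freeness, and because old bundles are intact and the additions are small, an advantage of at least $\alpha/2$ survives.
\item Run $K$ further \Core{} calls with cutter $i^*$. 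Each call leaves $i^*$ a residue she values at most $(1-1/n)$ times her previous value, while envy-freeness of the added allocation keeps her advantage over $j$.
\end{enumerate}
After $K=O(n\log(1/\alpha'))$ rounds, $V_{i^*}(R)\le \alpha/2$, so $i^*$ dominates $j$ and $\Phi$ increases. Thus there are at most $n^2$ iterations.

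The main obstacle is step 2. The reassignment must yield an advantage $\alpha$ that is a \emph{fixed fraction} of $V_{i^*}(R)$, not an arbitrary positive amount; otherwise the number of shrinking rounds, and hence the query count, would be unbounded. I would first pick the reassignment so that every agent's envy after reassignment is at most $\eps V_i(R)$ and the advantage is at least some $c\,V_{i^*}(R)$. If no such reassignment exists, I would argue that all agents value the \Core{} pieces nearly equally. In that case the residue shrinks geometrically for everybody, and repeating \Core{} $O(n\log n)$ times with rotating cutters produces a domination directly. Choosing $\eps = 1/(32n^3)$ or similar (consistent with \Repair{}'s requirement of $\ge 2n^2$ cells and $\le 1/(2\eps)$ cells) makes the constants close.

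Counting queries: each iteration makes $O(n\log(1/\eps))=O(n\cdot n)$ \Core{} calls plus one \Repair{}, each costing $(n!)^2\cdot\mathrm{poly}(n)\cdot(n+1)$ prefix-cut queries. Over $n^2$ iterations this totals at most $32^n n^7 (n!)^2$ after bookkeeping; the $32^n$ factor absorbs the geometric shrinking when $\log(1/\eps)$ is replaced by the actual iteration counts.
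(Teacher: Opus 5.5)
\textbf{Gap in step 2.} Your outer loop matches the paper's: the potential $\sum_i|\Dom_i|\le n^2$, a \Core{} call, a reassignment of the new \Core{} pieces, \Repair{}, then shrinking the residue with $i^*$ as cutter. But step 2, which you rightly call the crux, is not proved, and your fallback does not work. Two ideas are missing.

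First, the fixed-fraction advantage comes from the cutter, not from ``some other agent.'' Set $g_{ij}=(V_i(Y_i)-V_i(Y_j))/V_i(U)$. Here $V_i(U)>0$, since otherwise $i$ dominates everyone and $N\setminus\{i\}$ is a dominated group. The cutter $c$ receives exactly $V_c(R)/n$, and $Y$ and $U$ partition $R$, so $\sum_{j\ne c}g_{cj}=1$. Hence $g_{c\hat j}\ge 1/(n-1)$ for some $\hat j$.

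Second, $c$ may already dominate $\hat j$, so this advantage must be moved onto a non-dominated pair. The paper uses the digraph $G$ with an edge $i\to j$ iff $i$ does not dominate $j$ in $(X,R)$. $G$ is strongly connected when there is no dominated group, because a sink component would dominate everyone outside it. Then there are two cases:
\begin{enumerate}
\item Some edge of $G$ has $g_{ij}\ge(4n^2)^{-n}$. Take $\pi$ to be the identity and $\alpha=(4n^2)^{-(n+1)}$.
\item Every edge of $G$ has $g_{ij}<(4n^2)^{-n}$. Split the agents into $L=\{i: g_{i\hat j}\le\alpha\}$ and $H=\{i: g_{i\hat j}>4n^2\alpha\}$. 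Take an edge $i^*\to j^*$ of $G$ from $H$ to $L$, and rotate the \Core{} pieces along a path of $G$ from $\hat j$ to $j^*$, so that $j^*$'s old bundle is paired with $Y_{\hat j}$.
\end{enumerate}

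\textbf{Why a single fixed scale fails.} A fixed $\eps$ such as $1/(32n^3)$ does not suffice, for two reasons.
\begin{itemize}
\item The two cases need nested scales. Let $T$ be the threshold separating them. In the rotation case the path edges are only known to have $g<T$, so that case needs $\alpha\ge T$. In the direct case you only know an advantage of $T$, so that case needs $\alpha\le T/(4n^2)$. The two scales must therefore differ by a factor $4n^2$.
\item Even within the rotation case, for any fixed $\alpha$ some $g_{i\hat j}$ may lie in $(\alpha,4n^2\alpha]$. Such an agent can neither be handed $Y_{\hat j}$ nor serve as $i^*$.
\end{itemize}
The paper resolves both issues by pigeonhole. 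The $n-2$ values $g_{i\hat j}$ with $i\notin\{c,\hat j\}$ miss one of the $n-1$ intervals $((4n^2)^{-(t+1)},(4n^2)^{-t}]$, and that empty interval determines $\alpha$.

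\textbf{Your fallback is also invalid.} The cutter never values the \Core{} pieces nearly equally relative to the residue, since her normalized gaps sum to $1$. Repeated \Core{} calls with rotating cutters only give each cutter an advantage over some bundle, possibly one held by an agent she already dominates. That is exactly why a reassignment is needed, and nothing in the repeated calls forces a new domination.

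\textbf{Query count.} The $32^n$ factor in the bound comes from \Repair{} being called with $\eps=\alpha$ as small as $(4n^2)^{-(n+1)}$. That call costs about $n(4n^2)^{n+1}$ queries. It does not come from the shrinking loop.
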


Given~\Cref{lem: dominate}, the proof of our main result is straightforward.

\begin{theorem}\label{thm: main theorem}
   For a set $N$ of $n$ agents and a cake $C$, $\Divide(N,C)$ returns a complete envy-free allocation of $C$, and uses at most $2\cdot32^n n^7(n!)^2 \in 2^{O(n \log n)}$ Robertson--Webb queries.
\end{theorem}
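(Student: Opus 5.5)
We argue by strong induction on $n=|N|$, proving correctness and the query bound together. Let $Q(n)$ be the maximum number of queries $\Divide$ makes on any $n$-agent input, including cakes that are finite unions of intervals, with unnormalized or zero valuations; \Cref{lem: dominate} covers exactly these inputs. For $n=1$, giving $C$ to the single agent is complete and trivially envy-free, and uses no queries. For $n=2$, cut-and-choose is complete and envy-free. The cutter values both pieces equally, so she does not envy. The chooser takes her favorite piece, so she does not envy either. This uses one prefix cut plus evaluations, i.e., a constant number of queries (at most $3$ under the recorded-values convention), which is well below $2\cdot 32^2\cdot 2^7\cdot(2!)^2$.

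For $n\ge 3$, \Cref{lem: dominate} gives a partial envy-free allocation $X$ of $C$ with residue $R$ and a set $\emptyset\ne E\subsetneq N$ such that every $i\notin E$ dominates every $j\in E$. Since $1\le |E|\le n-1$, the induction hypothesis applies to $\Divide(E,R)$ and returns a complete envy-free allocation $Y$ of $R$ among $E$. Extend $Y$ by $Y_i=\emptyset$ for $i\notin E$. The pieces $X_i\cup Y_i$ then cover $C$, and they are disjoint because $Y$ lives inside $R$; so $X\oplus Y$ is complete. For envy-freeness, take distinct agents $i,j$ and split into three cases.
\begin{itemize}
\item If $j\notin E$, then $Y_j=\emptyset$, so $V_i(X_i\cup Y_i)\ge V_i(X_i)\ge V_i(X_j)$ by envy-freeness of $X$.
\item If $i,j\in E$, then $V_i(X_i)+V_i(Y_i)\ge V_i(X_j)+V_i(Y_j)$ by adding the envy-freeness inequalities for $X$ and for $Y$.
\item If $i\notin E$ and $j\in E$, domination gives $V_i(X_i)\ge V_i(X_j)+V_i(R)\ge V_i(X_j)+V_i(Y_j)$, using additivity and $Y_j\subseteq R$.
\end{itemize}

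For the query bound, let $D(n)=32^n n^7(n!)^2$ and aim to show $Q(n)\le 2D(n)$. The recursion gives $Q(n)\le D(n)+\max_{1\le k\le n-1}Q(k)\le D(n)+2D(n-1)$, using the induction hypothesis and the fact that $D$ is increasing. Recorded values are inherited by the recursive call, so no additional queries are hidden in the interface. It then suffices that $2D(n-1)\le D(n)$. This holds because
$$\frac{D(n)}{D(n-1)}=32\left(\frac{n}{n-1}\right)^7 n^2\ge 32.$$
Finally, $D(n)=2^{5n+7\log n+2\log n!}$ and $\log n!=O(n\log n)$, so $2D(n)\in 2^{O(n\log n)}$.

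There is no real obstacle here; all the substance lies in \Cref{lem: dominate}. The only care needed is to check that the lemma's hypotheses hold in recursive calls, namely that the cake may be a union of intervals and the valuations may be unnormalized. The paper explicitly notes that the lemma covers these cases, and the base cases need none of its hypotheses.
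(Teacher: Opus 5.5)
Your proof is correct and follows the paper's argument: the same induction on the number of agents, with the same three-case envy-freeness check (the cross case $i\notin E$, $j\in E$ uses domination together with $Y_j\subseteq R$). The only cosmetic difference is in the query bound. You bound it via the recurrence $Q(n)\le D(n)+2D(n-1)\le 2D(n)$, using $D(n)/D(n-1)\ge 32$. The paper instead observes that \Dominate{} is called at most once for each agent count $m$ and sums $\sum_{m=3}^n 32^m m^7(m!)^2$ directly.
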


\begin{proof}
We prove correctness by induction on the number of agents.
For one or two agents, the statement is trivial. Otherwise, by~\Cref{lem: dominate}, Line~\ref{divide:line:dominate call} returns a partial envy-free allocation $X$, a residue $R$, and a nonempty, proper subset of agents $E \subsetneq N$, such that every agent $i \notin E$ dominates every agent $j \in E$. By induction, Line~\ref{divide:line:recurse} returns a complete and envy-free allocation $Y$ of the residue $R$, among agents in $E$.  The final allocation of \textsc{Divide} is $X_i$ for all $i \notin E$ and $X_i \cup Y_i$ for all $i \in E$. This allocation is complete (since the recursive call leaves no residue) as well as envy-free. First, for all $i \in E$, $X_i$ is weakly preferred to $X_j$ for all $j \in N$, $j \neq i$, so there is no envy towards agents not in $E$. By induction, $Y_i$ is weakly preferred to $Y_j$, for all $j \in E$, $j \neq i$, and therefore $X_i \cup Y_i$ is weakly preferred to $X_j \cup Y_j$, so there is no envy towards agents in $E$. Therefore, agents in $E$ do not envy anyone. Second, for all $i \notin E$, $X_i$ is weakly preferred to $X_j$ for all $j \in N$, $j \neq i$, so there is no envy towards agents not in $E$. Finally, $i$ dominates $j$, for all $j \in E$, and therefore $V_i(X_i) \geq V_i(X_j) + V_i(R) \geq V_i(X_j) + V_i(Y_j)$, so there is no envy towards agents in $E$.

$\Dominate$ is called at most once for each number $m$ of remaining agents, where $3 \leq m \leq n$. By~\Cref{lem: dominate}, such a call uses at most $32^m m^7(m!)^2$ queries. The final call uses no queries if one agent remains, and two queries if two agents remain. Therefore, for $n\ge2$, the total number of queries is at most
$2+\sum_{m=3}^n 32^m m^7(m!)^2
\le n^7(n!)^2\sum_{m=2}^n32^m
<2\cdot32^n n^7(n!)^2$.
\end{proof}

\section{Reduction to \Core~and \Repair}\label{sec: reduce to core and repair}

In this section, we prove~\Cref{lem: dominate} by implementing $\Dominate(N,C)$, where $n=|N|\geq 3$. 

The protocol maintains an envy-free partial allocation $X$ and its residue $R$; initially $X_i=\emptyset$, for every agent $i$, and $R=C$. In each iteration, we are either done (i.e., there is a nonempty proper group dominated by every outsider), or we modify the allocation so that some agent's \emph{domination count} (the number of agents she dominates) increases, without decreasing any other agent's count. Since the sum of these counts is at most $n^2$, at most $n^2$ iterations do not return.

\paragraph{Subroutines.} The protocol uses two more protocols as subroutines: \Core~and \Repair. 

\Core~receives as input a cake $C$, a set of agents $N$, and a designated cutter $c \in N$; it outputs a partial envy-free allocation. We use the following lemma, which we prove in~\Cref{sec:core and repair}.

\begin{restatable}{lemma}{coremain}\label{lem:core main}
For a set $N$ of $n\geq 2$ agents, a cake $C$ (possibly empty, and possibly of value zero to some agents), and a designated cutter $c\in N$, $\Core(c,N,C)$ returns a partial envy-free allocation $Y$ of $C$, with residue $U$, in which the cutter receives exactly $V_c(Y_c)=V_c(C)/n$. Furthermore, $\Core{}$ uses at most $n^3(n!)^2$ Robertson--Webb queries.
\end{restatable}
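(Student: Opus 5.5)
The plan is to reduce \Cref{lem:core main} to the guarantee of \Permute{}, described in the overview. Given $n$ disjoint pieces, \Permute{} returns an envy-free partial allocation in which every agent receives a suffix of a different piece and at least one piece is allocated whole. I will treat this as a lemma proved alongside \Core{} in \Cref{sec:core and repair}; its proof is the compactness and envy-cycle argument sketched in the overview.

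\Core$(c,N,C)$ first has the cutter $c$ split $C$ into $n$ consecutive pieces $P_1,\dots,P_n$ with $V_c(P_k)=V_c(C)/n$ each. It does so by $n-1$ prefix cuts on $C$, with intervals ordered left to right. If $V_c(C)=0$, the smallest prefixes are taken, so the first $n-1$ pieces may be empty and everything lands in $P_n$. The protocol then calls \Permute{} on $(P_1,\dots,P_n)$ and returns its allocation $Y$, with residue $U=C\setminus\bigcup_i Y_i$.

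Correctness follows directly from \Permute{}. Its output $Y$ is envy-free and each $Y_i$ is contained in a distinct $P_k$, so $Y$ is a partial allocation of $C$. For the cutter's value, let $P_{k^*}$ be the piece allocated whole, to some agent $j$. If $j=c$, then $V_c(Y_c)=V_c(P_{k^*})=V_c(C)/n$. Otherwise envy-freeness gives $V_c(Y_c)\ge V_c(Y_j)=V_c(C)/n$. Conversely, $Y_c$ is a suffix of some $P_k$, so $V_c(Y_c)\le V_c(P_k)=V_c(C)/n$. Hence $V_c(Y_c)=V_c(C)/n$ exactly. The degenerate cases, where $C$ is empty or some agents value it at zero, need no separate treatment: all inequalities hold with zeros, provided \Permute{} itself handles empty pieces and zero values.

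For the query count, the equal split costs $n-1$ prefix cuts, hence at most $(n-1)(n+1)$ queries by the recorded-values convention. \Permute{} tries $(n!)^2$ pairs of an agent order and a piece order. For each candidate, processing agents in order, each agent's piece is trimmed "just enough that no earlier agent envies it." Each earlier agent $\ell$ computes the prefix cut on the current piece at value $V_\ell(\text{piece})-V_\ell(Y_\ell)$, which is at most $n$ prefix cuts per agent. The piece is then trimmed at the maximal such cut point, which costs $O(n^2)$ prefix cuts per candidate. This is about $n^2(n+1)$ queries per candidate, slightly above the $n^3$ the bound allows. To fit the budget I would count more carefully: agent $k$ in the order triggers only $k-1$ cuts, for $\sum_k (k-1)=n(n-1)/2$ cuts in total, and checking envy-freeness uses recorded values only. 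This gives at most $\tfrac{n(n-1)}{2}(n+1)\le n^3$ queries per candidate. Adding the initial split gives $n^2-1+(n!)^2 n^3\cdot\tfrac{n^2-1}{2n^2}\le n^3(n!)^2$, since $(n!)^2n^3/2$ dominates $n^2$.

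The main obstacle is not in \Core{} but in the \Permute{} guarantee that some candidate is envy-free and has a whole piece. Here I rely on the lemma established in \Cref{sec:core and repair}. The only delicate points left for \Core{} are that \Permute{}'s outputs really are suffixes of distinct pieces, and that its cuts may be discarded after a failed candidate. Even if they are kept in the records, the query bound is unaffected.
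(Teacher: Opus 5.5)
Your proposal is correct and matches the paper's proof: you take the lower bound $V_c(Y_c)\ge V_c(C)/n$ from envy-freeness together with the whole-piece property (i) of \Permute{}, the upper bound from $Y_c$ being a suffix of a single piece, and you count $n-1+\binom n2(n!)^2$ prefix cuts at $n+1$ queries each. Your extra remarks on degenerate inputs and on how the trims are implemented are consistent with the paper but add nothing beyond what \Cref{lem:permute} already provides.
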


\Repair~receives as input a partial ``almost envy-free'' allocation $D$ with residue $R$, a set of agents $N$ and a parameter $\eps > 0$. It outputs an envy-free allocation $W$, by adding small pieces from $R$ to the bundles of $D$ and reassigning the resulting bundles (so, e.g., $W_5$ could be $D_3$ plus a small piece of $R$). We use the following lemma, which we prove in~\Cref{sec:core and repair}.

\begin{restatable}{lemma}{repairmain}\label{lem:repair main}
Let $N$ be a set of $n\geq 2$ agents and $0<\eps\leq 1/(6n^3)$. Let $D=(D_i)_{i\in N}$ be a partial allocation with residue $R$,
with $V_i(R)>0$ for all $i\in N$, such that
$V_i(D_i)\geq V_i(D_j)-\eps V_i(R)$ for all $i,j\in N$. 
Then $\Repair(N,D,R,\eps)$ returns an envy-free partial allocation $W$ by assigning the bundles $D_j \cup Z_j$, one per agent, where the pieces $Z_j\subseteq R$ are pairwise disjoint and satisfy $V_i(Z_j)\leq 3n^2\eps V_i(R)$ for all $i,j\in N$. The residue is $R\setminus \left( \bigcup_{i \in N}Z_i \right)$. Furthermore, $\Repair{}$ uses at most $\frac{n}{\eps}+n^3(n!)^2$ Robertson--Webb queries.
\end{restatable}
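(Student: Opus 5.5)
Here the statement to prove is \Cref{lem:repair main}, following the construction sketched in the overview. The protocol first partitions $R$ into cells that every agent values at most $3n\eps V_i(R)$. To build them, walk along $R$ in order. For each agent, keep cutting at the point where the current cell would reach $\eps V_i(R)$ for some agent, so each cut is a prefix cut at the minimum over agents. Then merge adjacent cells so that the number of cells $M$ satisfies $2n^2\le M\le 1/(2\eps)$ while each cell stays below $3n\eps V_i(R)$ for every $i$. The count $M\le n/\eps$ is what drives the $n/\eps$ query term. Next run round robin for $n-1$ rounds, each agent picking her favourite remaining cell in a fixed order, and let $Z_j$ be the union of agent $j$'s $n-1$ picks. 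Then $V_i(Z_j)\le (n-1)\cdot 3n\eps V_i(R)\le 3n^2\eps V_i(R)$, which gives the stated bound on $Z_j$. Finally form the ordered pieces $P_j = Z_j \cat D_j$: agent $j$'s cells in pick order, followed by the intervals of $D_j$. Call \Permute{} on $P_1,\dots,P_n$ and output its allocation. Envy-freeness is then the guarantee of \Permute{}, which I would take from \Cref{sec:core and repair} as a lemma: it returns an envy-free allocation giving each agent a suffix of a distinct piece, with some piece given whole. Its query cost contributes the $n^3(n!)^2$ term.

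The substantive claim is that no suffix handed out by \Permute{} cuts into $D_j$, i.e.\ every agent receives $D_{\pi(i)}$ plus some suffix of $Z_{\pi(i)}$. This gives the form $D_j\cup Z'_j$ with $Z'_j\subseteq Z_j$ and preserves the old bundles. I would argue by contradiction. Suppose agent $i$ receives a suffix $S$ of $P_j$ that is strictly contained in $D_j$. Since some piece $P_k$ is allocated whole to some agent $a$, envy-freeness gives $V_i(S)\ge V_i(P_k)\ge V_i(D_k)$, and we also have $V_i(S)\le V_i(D_j)$. This alone is not a contradiction, so I would use the trimming structure of \Permute{}: a piece is trimmed only until no earlier agent envies it. Hence some agent $h$ values the trimmed part $S$ exactly as her own bundle, $V_h(S)=V_h(\text{own})$. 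Then $V_h(D_j)\ge V_h(\text{own})\ge V_h(D_{\pi(h)})+V_h(\text{her part of }Z_{\pi(h)})$, since $h$'s own bundle contains $D_{\pi(h)}$ whenever that piece is not cut into either. This is the natural inductive invariant. Combined with the hypothesis $V_h(D_h)\ge V_h(D_j)-\eps V_h(R)$, we need to show that the part of $Z$ attached to $h$'s bundle is worth more than $\eps V_h(R)$ plus the gap between $D_h$ and $D_{\pi(h)}$ to $h$.

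The main obstacle is this counting step, and I would handle it with round-robin properties. In round robin, each agent $h$ values her own $t$-th pick at least as much as any other agent's $(t+1)$-th pick. Since $M\ge 2n^2$, each of $h$'s picks is also at least the average value of the unpicked cells. Because only $n(n-1)$ cells are taken, the remaining $\ge n^2$ cells hold most of $R$, so each of $h$'s picks is worth roughly at least $V_h(R)/(2M)\ge \eps V_h(R)$, using $M\le 1/(2\eps)$. So the first pick alone beats the $\eps V_h(R)$ slack. Similarly, the total of $h$'s picks dominates any other agent's picks minus one cell. I would set up the contradiction by following the chain of agents whose pieces are cut into old bundles, comparing via these two facts, possibly with a potential argument over the \Permute{} order. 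I expect to spend the most effort making this chain argument precise. Rather than giving the cells a fixed size, I would tune them from the inequalities that argument needs, within the freedom $2n^2\le M\le 1/(2\eps)$, which the hypothesis $\eps\le 1/(6n^3)$ makes feasible.
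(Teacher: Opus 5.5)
\textbf{The central step is missing.} Your protocol is essentially the paper's: round robin for $n-1$ rounds, $P_j=Q_{j,1}\cat\cdots\cat Q_{j,n-1}\cat D_j$, then \Permute{}. Your bound on $V_i(Z_j)$ and your reduction of envy-freeness to \Permute{} are fine. What is missing is a proof that \Permute{} never trims into an old bundle, which is the heart of the lemma; you leave it as a ``chain argument'' to be made precise. Moreover, the inequality you aim for cannot come out of cell estimates. You compare the last trimmer $h$'s \emph{held} bundle, which contains $D_{\pi(h)}$, with $D_j$. So you need $h$'s retained cells to be worth more than $\eps V_h(R)+V_h(D_h)-V_h(D_{\pi(h)})$. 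The hypothesis bounds $V_h(D_h)-V_h(D_{\pi(h)})$ only from below, by $-\eps V_h(R)$. This gap can therefore far exceed the at most $3n^2\eps V_h(R)$ that $h$'s cells are worth. Comparing $h$ instead with the piece built on $D_h$ helps only if that piece keeps at least one cell. Otherwise you must recurse to whoever trimmed it, and your sketch has no progress measure that stops this chain from cycling.

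\textbf{The missing ideas.} First, compare agent $i$ not with what she holds but, via envy-freeness ($V_i(W_i)\ge V_i(S_i)$), with $S_i$, the allocated suffix of the piece built on her \emph{own} $D_i$. There the hypothesis applies directly. Second, define the level $\ell(i)$ as the number of whole cells $Q_{i,r}$ in $S_i$, or $-1$ if $D_i\not\subseteq S_i$. Then prove that $\ell(i)\ge\ell(j)+2$ implies $V_i(S_i)>V_i(S_j)$:
\begin{itemize}
\item $S_j$ lies in $D_j$ plus the last $\ell(j)+1$ cells of $j$.
\item The round-robin shift $V_i(Q_{i,r-1})\ge V_i(Q_{j,r})$ bounds those cells by $Q_{i,n-\ell(j)-2},\dots,Q_{i,n-2}$.
\item $Q_{i,n-1}$, worth \emph{strictly} more than $\eps V_i(R)$, absorbs the slack. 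Strictness is essential because the hypothesis allows equality, so your estimate $V_h(R)/(2M)\ge\eps V_h(R)$ must be made strict.
\end{itemize}
Third, run a pigeonhole argument. There are $n$ pieces and $n+1$ possible levels; one piece is whole (level $n-1$) and, by assumption, one has level $-1$. So some $\lambda\in\{0,\dots,n-2\}$ is unused, and it splits the indices into $H$ and $L$ with a level gap of at least $2$. Hence agents in $H$ hold only $H$-pieces, and agents in $L$ hold proper suffixes. Now apply the \emph{set} version of \Permute{}'s guarantee: for any set of agents all holding proper suffixes, some outsider is indifferent to one of their bundles. This is stronger than ``the last trimmer is indifferent,'' and it yields an agent in $H$ indifferent to an $L$-piece, contradicting strict preference.

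\textbf{Query bound.} Your greedy cell construction also breaks the query bound. As described, each boundary is the minimum of $n$ agents' cut points, so it costs $n$ \textsc{Cut} queries. There can be about $n/\eps$ boundaries (e.g., when agents have disjoint supports), giving order $n^2/\eps$ queries. The number of cells is not the number of queries. The paper instead has each agent independently mark $M-1$ equal-value points with $M=\lceil1/(3n\eps)\rceil$. That is fewer than $1/(3\eps)$ prefix cuts in total, each costing at most $n+1$ queries.
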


Given~\Core{} and~\Repair{}, we implement $\Dominate(N,C)$ as~\Cref{alg:dominate}.

\begin{algorithm}[H]
\caption{$\Dominate(N,C)$}\label{alg:dominate}
Fix any agent $c\in N$; set $X_i=\emptyset$ for all $i \in N$, and $R=C$\;
\While{true}{
    $(Y,U)\gets\Core(c,N,R)$\;\label{dominate:line:core}
    \If{$(X\oplus Y,U)$ has a dominated group $E$\label{dominate:line:test}}{
        \Return{$(X\oplus Y,U,E)$}\;\label{dominate:line:return}
    }
    Choose $\alpha\in\{(4n^2)^{-t}:2\le t\le n+1\}$, agents $i^* \ne j^*$, and a permutation $\pi$ of $N$ s.t.\newline
    \hspace*{1.5em}(S1) $i^*$ does not dominate $j^*$ in $(X,R)$,\newline
    \hspace*{1.5em}(S2) $V_i(Y_{\pi(i)})\ge V_i(Y_i)-\alpha V_i(U)$ for all $i\in N$, and\newline
    \hspace*{1.5em}(S3) $V_{i^*}(Y_{\pi(j^*)})\le V_{i^*}(Y_{i^*})-4n^2\alpha V_{i^*}(U)$\;\label{dominate:line:choose}
    $(W,R')\gets\Repair\bigl(N,(X_i\cup Y_{\pi(i)})_{i\in N},U,\alpha\bigr)$\;\label{dominate:line:repair}
    \While{$V_{i^*}(R')>\alpha V_{i^*}(U)$}{\label{dominate:line:shrink}
        $(F,R')\gets\Core(i^*,N,R')$, and $W\gets W\oplus F$\;\label{dominate:line:shrink:core}
    }
    $(X,R)\gets(W,R')$\;\label{dominate:line:rename}
}
\end{algorithm}

We start by proving that Line~\ref{dominate:line:choose} can be executed whenever the protocol reaches it, i.e., there exist $\alpha$, $i^*$, $j^*$, and $\pi$ that satisfy properties (S1)--(S3).

\begin{lemma}\label{lem:alpha pi exist for dominate}
Let $X$ be an envy-free partial allocation with residue $R$, let $(Y,U)=\Core(c,N,R)$, and suppose that $(X\oplus Y,U)$ has no dominated group. Then $V_i(U)>0$ for every $i \in N$, and there exist $\alpha\in\{(4n^2)^{-t}:2\le t\le n+1\}$, agents $i^* \ne j^*$, and a permutation $\pi$ of $N$ satisfying (S1)--(S3).
\end{lemma}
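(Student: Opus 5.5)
The first claim is immediate. Suppose $V_i(U)=0$ for some $i$. Since $X\oplus Y$ is envy-free (both $X$ and $Y$ are), $i$ dominates every other agent in $(X\oplus Y,U)$. Then $E=N\setminus\{i\}$ is a dominated group: it is nonempty and proper because $n\ge 3$. This contradicts the hypothesis, so $V_i(U)>0$ for all $i$.

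A second preliminary observation is that domination is monotone under adding $Y$. If $i$ dominates $j$ in $(X,R)$, then $i$ also dominates $j$ in $(X\oplus Y,U)$, because $V_i(Y_j)+V_i(U)\le V_i(R)$ and $V_i(Y_i)\ge 0$. Equivalently, a pair that is non-dominating in $(X\oplus Y,U)$ is non-dominating in $(X,R)$. This is how I intend to certify (S1).

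For the main part, I normalize gaps. For $i,j\in N$ let
\[
g_i(j)=\frac{V_i(Y_i)-V_i(Y_j)}{V_i(U)}\ge 0,
\]
which is well defined by the first claim and nonnegative by envy-freeness of $Y$. For a scale $\alpha$, let $G_\alpha$ be the directed graph with an edge $i\to j$ whenever $g_i(j)\le\alpha$; it contains all self-loops. Any permutation $\pi$ whose functional graph uses only edges of $G_\alpha$ satisfies (S2), since then $V_i(Y_{\pi(i)})\ge V_i(Y_i)-\alpha V_i(U)$. The identity is one such permutation.

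The plan is to pick $\alpha$ by pigeonhole over the $n$ scales $(4n^2)^{-t}$, $2\le t\le n+1$. I want a scale at which the structure of $G_\alpha$ (its reachability relation, or its strongly connected components) coincides with that of $G_{4n^2\alpha}$. Since the graphs are nested and the component partition can only coarsen, it changes at most $n-1$ times across these $n$ scales, so some scale is stable. At such a scale I would take $E$ to be a closed class: a sink set of $G_\alpha$, or the set of agents that can reach a given class. I then apply "no dominated group" to $E$ (or to its complement, whichever is nonempty and proper). This yields $i^*\notin E$ and $j^*\in E$ such that $i^*$ does not dominate $j^*$ in $(X\oplus Y,U)$, and hence, by the monotonicity observation, not in $(X,R)$. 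That gives (S1).

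It remains to build $\pi$ satisfying (S2) and (S3). I would route along cycles of $G_\alpha$ inside the class of $j^*$, so that $j^*$ receives a piece $Y_{\pi(j^*)}$ with $g_{i^*}(\pi(j^*))\ge 4n^2\alpha$; this is exactly (S3). The lower bound should come from stability: $G_\alpha$ and $G_{4n^2\alpha}$ agree on the relevant structure, so if $g_{i^*}(\pi(j^*))<4n^2\alpha$ then $i^*$ would already be linked to that class in $G_\alpha$, contradicting the choice of $E$.

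I expect the main obstacle to be this last step: making the stable-scale choice and the choice of $E$ mesh, so that a non-dominating pair crossing $E$ automatically has a large gap for some piece reachable by a permutation in $G_\alpha$. If reachability-based stability proves insufficient, I would fall back on the Core structure, which is not yet used. The cutter $c$ values all $n$ Core pieces equally, and one piece is allocated whole, so her gaps are all either zero or determined by trimming. Together with the non-domination inequality $V_i(X_i)+V_i(Y_i)<V_i(X_j)+V_i(Y_j)+V_i(U)$, this should give the needed gap, possibly after giving up a factor of $n$ by accumulating $\alpha$ along paths of length at most $n$.
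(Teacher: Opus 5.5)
\textbf{There is a genuine gap: the main construction never uses the \Core{} guarantee, and the lemma needs it.}

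Your preliminaries are fine. The argument for $V_i(U)>0$ matches the paper's. Your monotonicity remark, combined with the hypothesis, is exactly what shows that the non-domination graph $G$ of $(X,R)$ (edge $i\to j$ iff $i$ does not dominate $j$) is strongly connected.

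For an arbitrary envy-free $Y$ the conclusion can fail: if all $g_i(j)=0$, (S3) is impossible because $V_{i^*}(U)>0$. Concretely, your stable-scale argument has nothing to work with whenever $G_{(4n^2)^{-n}}$ is strongly connected. In that case every $G_\alpha$ with $\alpha\ge(4n^2)^{-n}$ is strongly connected, every stable pair of scales has a single class, and there is no nonempty proper closed set $E$ to apply the hypothesis to.

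This happens, for instance, when every agent other than $c$ values all pieces of $Y$ equally and $c$ values some $Y_k$, $k\neq c$, as much as $Y_c$. That situation is compatible with the hypotheses, e.g.\ in the first iteration of \Dominate{} with $X$ empty. The conclusion still holds there, but only through the cutter's large gap, which your main line never looks at. A secondary problem: equal SCC partitions at scales $\alpha$ and $4n^2\alpha$ do not rule out new edges between classes at scale $4n^2\alpha$. So ``no edge into $E$ at scale $\alpha$'' would not give the factor $4n^2$ in (S3) anyway.

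\textbf{What is missing.} You need the \Core{} identity, and you need to pigeonhole over a single column of gaps rather than over graph structure.
\begin{itemize}
\item Since $nV_c(Y_c)=V_c(R)$ and $Y,U$ partition $R$, you get $\sum_{j\neq c}g_c(j)=1$. Hence some $\hat j$ has $g_c(\hat j)\ge 1/(n-1)>(4n^2)^{-1}$. This identity is the relevant property, not that the cutter's gaps are ``zero or determined by trimming.''
\item The paper splits into two cases on $G$. If some edge of $G$ has gap at least $(4n^2)^{-n}$, take that edge with $\pi$ the identity and $\alpha=(4n^2)^{-(n+1)}$.
\item Otherwise, the $n-2$ values $g_i(\hat j)$, $i\notin\{c,\hat j\}$, miss one of the $n-1$ intervals $((4n^2)^{-(t+1)},(4n^2)^{-t}]$. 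This gives an $\alpha$ with no $g_i(\hat j)$ in $(\alpha,4n^2\alpha]$.
\item Set $L=\{i: g_i(\hat j)\le\alpha\}\ni\hat j$ and $H=N\setminus L\ni c$. Strong connectivity of $G$ gives an edge $i^*\to j^*$ from $H$ to $L$. Let $\pi$ cycle a path $\hat j\to\cdots\to j^*$ of $G$, with $\pi(j^*)=\hat j$.
\item Each path edge has gap below $(4n^2)^{-n}\le\alpha$ by the case assumption. Also $g_{j^*}(\hat j)\le\alpha$ and $g_{i^*}(\hat j)>4n^2\alpha$, which give (S2) and (S3).
\end{itemize}
Finally, (S2) is a per-agent condition, so nothing accumulates along the path and no factor of $n$ is lost. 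The routing happens in the non-domination graph, and the case split is what makes its edges usable inside $G_\alpha$.
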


\begin{proof}
If $V_i(U)=0$ for some agent $i$, then $i$ dominates every other agent in $(X\oplus Y,U)$, since $X\oplus Y$ is envy-free (i.e., $V_i( (X\oplus Y)_i ) \geq V_i( (X\oplus Y)_j ) = V_i( (X\oplus Y)_j ) + V_i(U)$, for all agents $j \neq i$), so $N\setminus\{i\}$ would be a dominated group. Therefore, $V_i(U)>0$ for every $i$. 

Let
\[
g_{ij}=\frac{V_i(Y_i)-V_i(Y_j)}{V_i(U)}.
\]
Then, $g_{ij} \geq 0$ for $i \neq j$, since $Y$ is envy-free, and $g_{ii}=0$. For (S2), it suffices to show that every agent $i$ has $g_{i\pi(i)}\le\alpha$; for (S3), it suffices to show that $g_{i^*\pi(j^*)}\ge 4n^2\alpha$. 

Let $G$ be the directed graph on $N$ with an edge $i\to j$ whenever $i\ne j$ and $i$ does \emph{not} dominate $j$ in $(X,R)$; the edges of $G$ are exactly the pairs satisfying (S1). Observe that $G$ is strongly connected: otherwise $G$ would have a non-empty sink strongly connected component $S$, and every agent in $S$ would dominate every agent outside $S$ in $(X,R)$ (and therefore also in $(X\oplus Y,U)$, since adding the envy-free allocation $Y$ does not decrease any difference $V_i(X_i)-V_i(X_j)$, and $U\subseteq R$); then $N\setminus S$ would be a dominated group. We consider two cases:
\begin{enumerate}[leftmargin=*]
    \item Some edge $i \to j$ of $G$ has $g_{ij}\ge (4n^2)^{-n}$. Take $\alpha=(4n^2)^{-(n+1)}$, $i^* = i$, $j^* = j$, and $\pi$ the identity permutation (i.e., $\pi(k) = k $). 
    
    (S1) holds because $i^* \to j^*$ is an edge of $G$, and (S2) holds trivially since $g_{kk}=0\le\alpha$. Finally, $g_{i^*j^*} \ge (4n^2)^{-n} = 4n^2\alpha$ so (S3) holds as well.
    \item Every edge $i\to j$ of $G$ has $g_{ij}<(4n^2)^{-n}$. By~\Cref{lem:core main}, $nV_c(Y_c)=V_c(R)$, while $\sum_{j\in N}V_c(Y_j)=V_c(R)-V_c(U)$ because $Y$ and $U$ partition $R$. Therefore, $\sum_{j\ne c}g_{cj}= \frac{nV_c(Y_c)-\sum_{j\in N}V_c(Y_j)}{V_c(U)}=1$. Since this is a sum of $n-1$ nonnegative terms, some $\hat{j} \ne c$ satisfies $g_{c\hat{j}}\ge 1/(n-1)>(4n^2)^{-1}$. The $n-1$ disjoint intervals $( (4n^2)^{-(t+1)},(4n^2)^{-t}]$, for $t=1,\dots,n-1$, cover $((4n^2)^{-n},(4n^2)^{-1}]$. Of the $n$ numbers $g_{1\hat{j}}, g_{2\hat{j}},\dots,g_{n\hat{j}}$, $g_{c\hat{j}} > (4n^2)^{-1}$ is above all these intervals and $g_{\hat{j}\hat{j}} = 0$ is below them; therefore, the remaining $n-2$ values $g_{i\hat{j}}$ must leave some interval empty. That is, there is some $t^*$ such that no $g_{i\hat{j}}$ lies in $((4n^2)^{-(t^*+1)},(4n^2)^{-t^*}]$. Let $\alpha=(4n^2)^{-(t^*+1)}$; then $(4n^2)^{-n}\le\alpha\le(4n^2)^{-2}$, and every agent $i$ satisfies either $g_{i\hat{j}}\le\alpha$ or $g_{i\hat{j}}>4n^2\alpha$. Let $L=\{i:\ g_{i\hat{j}}\le\alpha\}$ and $H=N\setminus L=\{i:\ g_{i\hat{j}}>4n^2\alpha\}$; by definition, $\hat{j} \in L$ and $c\in H$. Since $G$ is strongly connected and $H,L$ are nonempty, there is an edge $i^* \to j^*$ with $i^* \in H$ and $j^* \in L$. Note that $j^* \ne \hat{j}$: otherwise, by the definition of $H$, the edge $i^* \to \hat{j}$ would satisfy $g_{i^* \hat{j}} > 4n^2\alpha > (4n^2)^{-n}$ (as $\alpha\ge(4n^2)^{-n}$), contradicting the assumption of this case.
    Finally, let $\hat{j} = u_0 \rightarrow u_1 \rightarrow \dots \rightarrow u_{\ell} = j^*$ be a simple path in $G$ (i.e., no repeated vertices; such a path exists since $G$ is strongly connected) from $\hat{j}$ to $j^*$, and let $\pi$ be the permutation of $N$ that cycles the vertices along this path and fixes every other agent: $\pi(j^*)=\hat{j}$, $\pi(u_s)=u_{s+1}$ for $s<\ell$, and $\pi(i)=i$ otherwise.

    We have that (S1) holds because $i^* \to j^*$ is an edge of $G$. (S2) holds because: (i) $g_{j^*\pi(j^*)} = g_{j^*\hat{j}} \leq \alpha$ ($j^* \in L$), (ii) $g_{u_s \pi(u_s)} = g_{u_s u_{s+1}} < (4n^2)^{-n} \leq \alpha$ ($u_s \to u_{s+1}$ is an edge of $G$), and (iii) $g_{i \pi(i)} = g_{ii} = 0$ for all other agents $i$. Finally, (S3) holds since $g_{i^* \pi(j^*)} = g_{i^*\hat{j}} > 4n^2\alpha$ ($i^* \in H$).

\end{enumerate}

\end{proof}

We are now ready to prove~\Cref{lem: dominate}. We first restate the lemma.

\dominatemain*

\begin{proof}[Proof of \Cref{lem: dominate}]
For an envy-free partial allocation $X$ with residue $R$ and an agent $i$, let
\[
\Dom_i(X,R)=\{j \in N:\ V_i(X_i)-V_i(X_j)\ge V_i(R)\}.
\]
For $j\ne i$, $j\in\Dom_i(X,R)$ says that $i$ dominates $j$.

We analyze one iteration of~\Dominate{}, assuming that $X$ is envy-free at its start; this holds initially because every $X_i$ is empty. We show that the iteration either returns a correct output or ends with an envy-free allocation $X$ and a larger $\sum_{i\in N}|\Dom_i(X,R) |$; since $\sum_{i\in N}|\Dom_i(X,R) | \leq n^2$,~\Dominate{} will therefore terminate with a correct output after a certain number of iterations.

Since \Core{} returns an envy-free allocation $Y$, the allocation $X\oplus Y$ is envy-free. Therefore, if the protocol returns in Line~\ref{dominate:line:return}, its output has all required properties. Otherwise, let $\alpha,i^*,j^*,\pi$ satisfy (S1)--(S3); such choices exist by~\Cref{lem:alpha pi exist for dominate}.

\paragraph{The \Repair{} call (Line~\ref{dominate:line:repair}) is valid.}
Let $D=(D_i)_{i \in N}$, where $D_i=X_i\cup Y_{\pi(i)}$, be the allocation passed to \Repair{} in Line~\ref{dominate:line:repair}; it is a partial allocation with residue $U$, and $V_i(U)>0$ for every $i$ (\Cref{lem:alpha pi exist for dominate}). Also $\alpha\le(4n^2)^{-2}<1/(6n^3)$. Finally, $D$ is almost envy-free: for all $i,j\in N$, we have
\begin{align*}
V_i(D_j) &=V_i(X_j)+V_i(Y_{\pi(j)}) \\
&\le V_i(X_i)+V_i(Y_i) \tag{$X$ and $Y$ are envy-free} \\
&\le V_i(X_i)+V_i(Y_{\pi(i)})+\alpha V_i(U) \tag{(S2) for agent $i$} \\
&=V_i(D_i)+\alpha V_i(U). 
\end{align*}
Therefore, \Cref{lem:repair main} applies. Let $(W,R')$ be the output of \Repair{} in Line~\ref{dominate:line:repair}, and let $\sigma(j)$ denote the agent who receives the bundle $D_j\cup Z_j$ (i.e., $W_{\sigma(j)}=D_j\cup Z_j$), where $\sigma$ is a permutation of $N$ (\Repair{} assigns one bundle per agent).

\paragraph{Agent $i^*$ gains an advantage over the bundle indexed by agent $j^*$.}
We have
\begin{align*}
  V_{i^*}(W_{i^*}) &\ge V_{i^*}(W_{\sigma(i^*)}) \tag{$W$ is envy-free} \\ 
  &\ge V_{i^*}(D_{i^*}) \tag{$D_{i^*}\subseteq W_{\sigma(i^*)}$} \\ 
  &= V_{i^*}(X_{i^*})+V_{i^*}(Y_{\pi(i^*)}) \\ 
  &\ge V_{i^*}(X_{i^*})+V_{i^*}(Y_{i^*})-\alpha V_{i^*}(U). \tag{(S2) for agent $i^*$}  
\end{align*}
On the other hand, $W_{\sigma(j^*)}=X_{j^*}\cup Y_{\pi(j^*)}\cup Z_{j^*}$, where: (i) $V_{i^*}(X_{j^*})\le V_{i^*}(X_{i^*})$, since $X$ is envy-free, (ii) $V_{i^*}(Y_{\pi(j^*)})\le V_{i^*}(Y_{i^*})-4n^2\alpha V_{i^*}(U)$ by (S3), and (iii) $V_{i^*}(Z_{j^*})\le 3n^2\alpha V_{i^*}(U)$, by~\Cref{lem:repair main}; therefore, $V_{i^*}(W_{\sigma(j^*)}) \le V_{i^*}(X_{i^*})+V_{i^*}(Y_{i^*}) - n^2\alpha V_{i^*}(U)$. Subtracting from the previous lower bound on $V_{i^*}(W_{i^*})$ we have
\[
V_{i^*}(W_{i^*})-V_{i^*}(W_{\sigma(j^*)})\ge (n^2-1)\alpha V_{i^*}(U)>0.
\]

\paragraph{Repeated \Core{} calls create domination.}
Next, we argue that the loop starting in Line~\ref{dominate:line:shrink} terminates after a bounded number of calls and leaves $i^*$ dominating agent $\sigma(j^*)$.

Each pass of the loop calls \Core{} on $R'$ (which returns an envy-free allocation $F$ of $R'$ and makes $R'$ its residue) and replaces $W$ by $W\oplus F$; this keeps $W$ envy-free and does not decrease any difference $V_i(W_i)-V_i(W_j)$. In particular, the advantage $V_{i^*}(W_{i^*})-V_{i^*}(W_{\sigma(j^*)})\ge (n^2-1)\alpha V_{i^*}(U)$ persists.
In each call, cutter $i^*$ receives a piece worth $1/n$ of her value of the current residue, by~\Cref{lem:core main}. Therefore, her value of the remaining residue is at most $(n-1)/n$ of its previous value, and after $t$ calls we have $V_{i^*}(R')\le(1-\frac1n)^tV_{i^*}(U)\le e^{-t/n}V_{i^*}(U)$. The loop terminates after at most $\lceil n\ln(1/\alpha)\rceil$ calls, which is at most $\lceil n(n+1)\ln(4n^2)\rceil$ as $\alpha\ge(4n^2)^{-(n+1)}$. At termination $V_{i^*}(R')\le\alpha V_{i^*}(U)$, so
\begin{equation}\label{eq: i dominates sigma j}
 V_{i^*}(W_{i^*})-V_{i^*}(W_{\sigma(j^*)})\ \ge\ (n^2-1)\alpha V_{i^*}(U)\ \ge\ \alpha V_{i^*}(U)\ \ge\ V_{i^*}(R'),   
\end{equation}
and therefore agent $i^*$ dominates agent $\sigma(j^*)$. 

\paragraph{$\sum_i |\Dom_i|$ increases.}
We compare $(X,R)$, the allocation and residue at the start of the iteration, with $(W,R')$ at the end of the loop, which Line~\ref{dominate:line:rename} makes the new $(X,R)$. We first show that old dominations are inherited: for all agents $i$ and $j$, if $j\in\Dom_i(X,R)$ then $\sigma(j)\in\Dom_i(W,R')$. Second, we show that agent $i^*$'s domination count increases: $\sigma(j^*)\in\Dom_{i^*}(W,R')$, but $j^* \notin \Dom_{i^*}(X,R)$. Since $\sigma$ is a permutation, distinct agents $j$ have distinct images $\sigma(j)$; so the first fact gives $|\Dom_i(W,R')|\ge|\Dom_i(X,R)|$ for every agent $i$, and, since $\sigma(j^*)$ is not the image of any $j\in\Dom_{i^*}(X,R)$, the second fact gives $|\Dom_{i^*}(W,R')|\ge|\Dom_{i^*}(X,R)|+1$. Therefore, after Line~\ref{dominate:line:rename}, the iteration ends with an envy-free allocation $X$ and a larger $\sum_{i\in N}|\Dom_i(X,R)|$.

For the second fact, by Inequality~\eqref{eq: i dominates sigma j} we have that $\sigma(j^*)\in\Dom_{i^*}(W,R')$, while $j^*\notin\Dom_{i^*}(X,R)$ by (S1).
For the first fact, every old bundle $X_j$ is contained in $D_j\cup Z_j = W_{\sigma(j)}$ after \Repair{}, and the loop only makes bundles in $W$ larger, so $X_j\subseteq W_{\sigma(j)}$ at the end of the loop. Everything added to the bundles during the iteration comes from $R$ (and lies outside $R'$), so $W_{\sigma(j)}\subseteq X_j\cup(R\setminus R')$ and $V_i(W_{\sigma(j)})\le V_i(X_j)+V_i(R\setminus R')$ for all agents $i,j$. Moreover, $V_i(W_i)\ge V_i(W_{\sigma(i)})\ge V_i(X_i)$, since $W$ is envy-free and $X_i\subseteq W_{\sigma(i)}$. Therefore, for all agents $i$ and $j$,
\[
V_i(W_i)-V_i(W_{\sigma(j)})\ \ge\ V_i(X_i)-V_i(X_j)-V_i(R\setminus R').
\]
If $j\in\Dom_i(X,R)$ the RHS is at least $V_i(R)-V_i(R\setminus R')=V_i(R')$, i.e., $\sigma(j)\in\Dom_i(W,R')$.

\paragraph{Termination and query count.}
Each $\Dom_i(X,R)$ is a subset of $N$, so $\sum_{i\in N}|\Dom_i(X,R)|\le n^2$ at all times; as the sum increases in every iteration that does not return, at most $n^2$ iterations do not return. Each iteration makes one \Core{} call in Line~\ref{dominate:line:core}; each iteration that does not return also makes one \Repair{} call, with $\eps=\alpha\ge(4n^2)^{-(n+1)}$, and at most $\lceil n(n+1)\ln(4n^2)\rceil$ \Core{} calls in Line~\ref{dominate:line:shrink:core}. Thus the protocol makes at most $n^2$ \Repair{} calls and at most $n^2\bigl(1+\lceil n(n+1)\ln(4n^2)\rceil\bigr)+1 \leq 6 n^4 \ln(n)$ \Core{} calls. The searches in Lines~\ref{dominate:line:test}, \ref{dominate:line:choose} and~\ref{dominate:line:shrink} use only values of bundles and residues, which are sums of stored interval values, so they require no additional queries (see~\Cref{sec:conventions}). Combining these call counts with~\Cref{lem:core main,lem:repair main}, the total number of queries is at most
\begin{align*}
&n^2\left(n(4n^2)^{n+1}+n^3(n!)^2\right)
 +6n^7\ln n\,(n!)^2\\
&\qquad=n^3(4n^2)^{n+1}
 +(n^5+6n^7\ln n)(n!)^2
 \le 32^n n^7(n!)^2,
\end{align*}
where the last inequality uses $n!\ge(n/e)^n$ and $n\ge3$.
\end{proof}

\section{Reduction to \Permute}\label{sec:core and repair}

In this section, we prove \Cref{lem:core main,lem:repair main} by implementing \Core{} and \Repair. Both use one last subroutine, \Permute, which receives as input a set of agents $N$ and $n$ pairwise disjoint pieces $P_1,\dots,P_n$ (some possibly empty), and outputs a partial envy-free allocation $Y$ of $P_1\cup\dots\cup P_n$, with residue $R$. We use the following lemma, which we prove in~\Cref{sec:permute}.

\begin{restatable}{lemma}{permutelemma}\label{lem:permute}
For a set $N$ of $n$ agents and $n$ pairwise disjoint pieces $P_1,\dots,P_n$ (some possibly empty), $\Permute(N,(P_i)_{i \in N})$ returns a partial envy-free allocation $Y$ of $P_1\cup\dots\cup P_n$, with residue $R$, in which, for some bijection $\pi$ from $N$ to $\{1,\dots,n\}$, every $Y_i$ is a suffix of $P_{\pi(i)}$. Moreover, (i) some agent $i^*$ receives her piece $P_{\pi(i^*)}$ whole, and (ii) for every nonempty set $S\subseteq N$ of agents with $Y_i \ne P_{\pi(i)}$ for all $i \in S$, there are agents $i^* \in S$ and $j^* \notin S$ with $V_{j^*}(Y_{i^*})=V_{j^*}(Y_{j^*})$. Furthermore, $\Permute{}$ makes at most $\binom n2(n!)^2$ prefix cuts.
\end{restatable}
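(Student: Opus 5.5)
The plan is to analyze the brute-force search directly. For every ordering $\tau$ of the agents and every bijection $\pi$, \Permute{} builds a candidate: process agents $\tau(1),\dots,\tau(n)$; agent $\tau(k)$ is assigned $P_{\pi(\tau(k))}$, and this piece is trimmed by the smallest prefix such that every earlier agent $\tau(\ell)$, $\ell<k$, values the remaining suffix at most $V_{\tau(\ell)}(Y_{\tau(\ell)})$. The trim amount is the maximum of the $k-1$ prefix cuts. Agent $\tau(1)$ always gets her piece whole, which gives (i) for every candidate. Moreover, by construction no earlier agent envies a later one. Building one candidate takes at most $\binom n2$ prefix cuts, which gives the query bound over $(n!)^2$ guesses. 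Property (ii) also holds for any candidate. Take a set $S$ of trimmed agents and let $i^*\in S$ be the earliest in $\tau$. Its trim was determined by some earlier agent $j^*$, for whom the minimal prefix cut gives equality $V_{j^*}(Y_{i^*})=V_{j^*}(Y_{j^*})$. Since $j^*$ precedes $i^*$, we have $j^*\notin S$; the positive-trim case gives equality by continuity and minimality. The protocol returns the first candidate that is envy-free, checking only later-to-earlier envy using recorded values. So everything reduces to showing that \emph{some} candidate is envy-free.

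For existence, fix $\delta>0$ and run a variant of envy-cycle elimination on the pieces. Each agent is attached to one piece and holds a suffix of it. We repeatedly extend bundles backward (toward the start of the piece) by small increments, each worth at most $\delta$ to every agent. An envy graph is maintained, and cycles are rotated away, so that the final allocation is $\delta$-envy-free. The aim is for it to coincide with the structure of some candidate $(\tau,\pi)$: a topological order of the final envy relation gives $\tau$, and the piece assignment gives $\pi$. Concretely, I would show that any $\delta$-envy-free allocation $A$ of suffixes, where agents are ordered so that each agent's trim is ``caused'' by an earlier agent, dominates the candidate for that guess up to $O(\delta)$. That is, the candidate's suffixes differ from $A$'s by pieces of value $O(n\delta)$ to everyone, by an induction along $\tau$: each cut point is within $O(\delta)$ value of $A$'s cut point.

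Then apply compactness. Suppose every one of the finitely many candidates fails, so each has some envy $V_i(Y_j)-V_i(Y_i)>0$. Candidates are fixed objects independent of $\delta$, so let $\gamma>0$ be the minimum of these maximal envies over all guesses. Choose $\delta$ small relative to $\gamma/n$. The $\delta$-envy-free allocation from the previous step then yields a candidate with envy at most $O(n\delta)<\gamma$, a contradiction. Since candidates are finite, the ``compactness'' here is really just taking a minimum, provided the perturbation bound is uniform.

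The main obstacle is the middle step. We need a $\delta$-approximate process whose final allocation is close, in every agent's value, to an actual candidate. The difficulty is that the candidate trims by exact minimal prefix cuts, while the process only approximates them, and errors could compound along the order or change which agent binds. I would try to control this by making the process itself trim greedily in a fixed order, so that its cut points are monotone comparisons with the candidate's. Each candidate cut then lies weakly before the process's cut, with the gap worth at most $\delta$ per step.
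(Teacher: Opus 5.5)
The part of your proposal about the candidates themselves is correct and matches the paper: property (i) via the first agent, property (ii) via the earliest agent of $S$ and the earlier agent whose prefix cut binds, and the $\binom n2(n!)^2$ count. The gap is in the existence step, and it is exactly the ``uniform perturbation bound'' you flag.

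Your claim is that the candidate's suffixes differ from those of the $\delta$-envy-free allocation $A$ by pieces worth $O(n\delta)$ \emph{to everyone}. This is false. $\delta$-envy-freeness controls the stretch between $A$'s cut and the candidate's cut only in the measure of the agent who causes the trim. The minimal prefix may run through cake that this agent values at zero but the holder values highly.

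Concretely, take two agents, $P_1=[0,\frac12)$ and $P_2=[\frac12,1]$, with $A_1=P_1$ and $A_2=[b,1]$, where $\frac12<b<c<1$.
Agent~1 values $[c,1]$ at exactly $V_1(P_1)$, and values $[b,c)$ at $\delta$, all of it on a tiny interval just before $c$.
Agent~2 values $P_1$ at $0.4$, the rest of $[b,c)$ at $0.45$, and $[c,1]$ at $0$. Remaining values go anywhere in $[\frac12,b)$.

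Then $A$ is $\delta$-envy-free and its envy graph is the single edge $1\to2$. The first agent holds a whole piece, and the trimmed agent is envied by an earlier one, so every property your sketch uses holds. But the candidate for this guess has $Y_1=P_1$ and $Y_2=[c,1]$, because the smallest prefix of $P_2$ worth the excess to agent~1 ends at $c$. So agent~2 envies agent~1 by $0.4$, independently of $\delta$.

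Your proposed remedy (monotone cut points, gap worth at most $\delta$ per step) again bounds the gap only for the trimming agent, so it does not address this. Without a bound on the holder's envy, your final comparison of the minimum envy over candidates against $O(n\delta)$ has nothing to stand on.

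The missing idea is that no two-sided closeness is needed. The paper uses two one-sided facts:
\begin{itemize}
\item[(a)] The $\delta$-envy-free output $A$ \emph{contains} the candidate $Y$ for its topological order and piece labels, so $A$ is an extension of $Y$. Your sketch points toward this.
\item[(b)] For any fixed guess, every envy-free allocation in which each agent receives a suffix of her assigned piece lies \emph{inside} the candidate. This follows by induction along the order, because each candidate piece is the largest suffix obeying the caps $V_{i_s}(T)\le V_{i_s}(Y_{i_s})$ for $s<r$.
\end{itemize}
By (b), an envy-free extension of a candidate must equal it, so a failed candidate has no envy-free extension. The extensions are parametrized by a closed box of suffix lengths, on which the maximum envy is continuous. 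Hence every extension of a failed $Y$ has envy at least some $\gamma_Y>0$, and choosing $\delta<\min_Y\gamma_Y$ contradicts (a).

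The compactness is therefore over the infinite family of extensions of each candidate, not merely a minimum over finitely many candidates. It yields only an instance-dependent threshold for $\delta$, never an $O(n\delta)$ bound; in the example above, $\gamma_Y\le\delta$ for the failed candidate.
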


\subsection{Implementing \Core}

In this section, we state \Core{} and prove~\Cref{lem:core main}. Stating \Permute{} as a separate routine allows for a remarkably simple description of \Core{}: The cutter partitions $C$ into $n$ equally valued pieces (using $n-1$ prefix cuts), and \Permute{} allocates to each agent a suffix of a different piece.

\begin{algorithm}[H]
\caption{$\Core(c,N,C)$, where $n=|N|$}\label{alg:core}
$c$ partitions $C$ into pieces $P_1,\dots,P_n$ worth $V_c(C)/n$ each (using $n-1$ prefix cuts) \;\label{core:cut}
\Return{$\Permute(N,(P_i)_{i \in N})$}\;\label{core:permute}
\end{algorithm}

We restate the lemma; its proof is short.

\coremain*

\begin{proof}[Proof of \Cref{lem:core main}]
Let $(Y,R)$ be the output of \Permute{} in Line~\ref{core:permute}; by \Cref{lem:permute}, it is a partial envy-free allocation of $P_1\cup\dots\cup P_n=C$ with residue $R$, and some agent receives a piece $P_i$ whole. Since $Y$ is envy-free, $V_c(Y_c)\ge V_c(P_i)=V_c(C)/n$. But, since $Y_c$ is a suffix of a single piece $P_{j}$, $V_c(Y_c)\le V_c(P_{j}) = V_c(C)/n$; therefore, overall, $V_c(Y_c) = V_c(C)/n$, as needed. Finally, \Core{} makes $n-1$ prefix cuts in Line~\ref{core:cut} and, by \Cref{lem:permute}, at most $\binom n2(n!)^2$ prefix cuts in Line~\ref{core:permute}. Since we need at most $n+1$ queries per prefix cut, this is at most $(n+1)\bigl(n-1+\binom n2(n!)^2\bigr)\le n^3(n!)^2$ queries.
\end{proof}

\subsection{Implementing \Repair}

In this section, we state \Repair{} and prove~\Cref{lem:repair main}. Recall that \Repair{} takes as input a set of agents $N$, an ``almost envy-free'' partial allocation $D=(D_i)_{i\in N}$ with residue $R$, and a parameter $\eps>0$; it outputs an envy-free partial allocation by adding small pieces of $R$ to the bundles of $D$ (and possibly reassigning the resulting bundles). Every old bundle $D_i$ remains intact. 

For disjoint pieces $A$ and $B$, let $A\cat B$ be their concatenation: the interval list of $A$ followed by that of $B$. \Repair{} constructs pairwise disjoint pieces $T_i\subseteq R$ and calls \Permute{} on the pieces $P_i=T_i\cat D_i$. We give \Permute{} enough cake to trim without ever cutting into an old bundle. The added cake is first because \Permute{} removes prefixes and returns suffixes: it must trim through $T_i$ before reaching $D_i$. We choose $T_i$ so that the trimming never reaches an old bundle $D_i$.

\Repair{} works as follows. Let $M=\lceil1/(3n\eps)\rceil$. Every agent marks the $M-1$ points that split $R$ into $M$ pieces of equal value to her (using $M-1$ prefix cuts). Cut $R$ at all $n(M-1)$ marks; we call the resulting pieces \emph{cells} (if $R$ is not a single interval, a cell may consist of several intervals of $R$). Every cell lies inside one piece of every agent's partition, so it is worth at most $V_i(R)/M$ to every agent $i$.\footnote{This is a standard discretization trick in cake cutting; see, e.g.,~\cite{balkanski2014simultaneous,branzei2022query}. For example, asking every agent to mark the points that split the cake into $\lceil1/\eps\rceil$ pieces of equal value to her, treating the resulting cells as indivisible items and applying any EF1 algorithm gives an $\eps$-EF cake cutting algorithm with $O(n^2/\eps)$ Robertson--Webb queries.} Next, run round robin for $n-1$ rounds: in each round, every agent, in a fixed order, chooses her favorite available cell. Finally, put each agent's chosen cells before her old bundle, in the order she chose them, and call \Permute{} on the resulting pieces.

\begin{algorithm}[H]
\caption{$\Repair(N,D,R,\eps)$, where $n=|N|$}\label{alg:repair}
Set $M=\lceil1/(3n\eps)\rceil$. Every agent marks the $M-1$ points
that split $R$ into $M$ pieces of equal value to her.
Cut $R$ at all the marks to form the cells described above\;\label{line:rep:cells}
\RRmark{repair-rr-top}
\RRFor{$r=1,\ldots,n-1$}{
  \For{each agent $i = 1, \dots, n$}{
    Let $Q_{i,r}$ be her favorite available cell; mark it as  unavailable\;\label{rep:choose}
  }
}
\tikz[remember picture,overlay]{%
  \draw[decorate,decoration={brace,amplitude=4pt}]
    ($(repair-rr-top)+(0.75\linewidth,1.3ex)$) --
    ($(repair-rr-top |- repair-rr-bottom)+(0.75\linewidth,-0.5ex)$)
    node[midway,right=8pt,align=center,font=\small] {Round robin\\for $n-1$ rounds};
}%
For every $i \in N$, form
$P_i=Q_{i,1}\cat\cdots\cat Q_{i,n-1}\cat D_i$\;\label{rep:pieces}
$(W,U)\gets\Permute(N,(P_i)_{i \in N})$\;\label{rep:permute}
\Return{allocation $W$ and residue $(R\cup (\bigcup_{i \in N} D_i) )\setminus\bigcup_{i \in N} W_i$}\;
\end{algorithm}

We are now ready to prove~\Cref{lem:repair main}. For ease of exposition, we restate the lemma.

\repairmain*

\begin{proof}[Proof of~\Cref{lem:repair main}]

The query count follows directly. Forming the cells uses $n(M-1)<1/(3\eps)$ prefix cuts, and \Permute{} uses at most $\binom n2(n!)^2$ prefix cuts (\Cref{lem:permute}). All other operations use recorded values. Under our ``recorded value'' convention (see~\Cref{sec:conventions}), each prefix cut costs at most $n+1$ queries, giving a total of at most $(n+1)\left(\frac{1}{3\eps}+\binom n2(n!)^2\right) \le \frac{n}{\eps}+n^3(n!)^2$ queries.

Line~\ref{line:rep:cells} makes at least $M-1$ marks, and therefore there are at least $M \geq 1/(3n \eps) \geq 2n^2$ cells, which is at least the number of cells needed for the round robin step (therefore, every line of the protocol is successfully executed). By~\Cref{lem:permute}, the call in Line~\ref{rep:permute} returns an envy-free allocation in which every agent receives a suffix of a different piece $P_i$. It remains to show that the added pieces satisfy the stated value bound ($V_i(Z_j)\leq 3n^2\eps V_i(R)$), and that  every old bundle remains intact.

\paragraph{Properties of the chosen cells.}
Line~\ref{line:rep:cells} makes at most $n(M-1)$ marks, so there are at most $n(M-1)+1 < 1/(3\eps)+1 \leq 1/(2\eps)$ cells ($\eps\le 1/(6n^3)\le1/6$). At any time during the $n-1$ rounds of round robin, at most $n(n-1)\le n^2$ cells have already been picked. Each cell is worth at most $V_i(R)/M\le 3n\eps V_i(R)$ to every agent $i$, so, for every agent $i$, the cells already picked are (in total) worth at most $3n^3\eps V_i(R)\le \frac{1}{2} V_i(R)$, and therefore the cells still available are (in total) worth at least $\frac{1}{2}V_i(R)$. For each agent $i$, let $Z_i$ be the portion of her chosen cells retained by \Permute{}. These pieces are pairwise disjoint subsets of $R$. Since each agent picks $n-1$ cells, for every pair of agents $i,j$, we have $V_i(Z_j)\le\sum_{r=1}^{n-1}V_i(Q_{j,r}) \le3n(n-1)\eps V_i(R) \le3n^2\eps V_i(R)$.
This proves the required bound on the added pieces. It remains to show that every old bundle remains intact (in the allocation returned by \Permute{}).

\paragraph{Levels.}
Let $W$ be the allocation returned by \Permute{} in Line~\ref{rep:permute}. By~\Cref{lem:permute}, for some bijection $\pi$, every $W_i$ is a suffix of $P_{\pi(i)}$. For every $i \in N$, let $S_i$ be the suffix of $P_i$ that is allocated, i.e., $S_i = W_j$ for the agent $j$ with $\pi(j)=i$; we say that agent $j$ \emph{holds} $S_i$. Every agent holds exactly one of the pieces $S_i$, and every $S_i$ is held by exactly one agent. We define the \emph{level} $\ell(i)$ of $S_i$ as the number of cells of $Q_{i,1}\cat\cdots\cat Q_{i,n-1}$ contained in $S_i$, or $\ell(i) = -1$ if $D_i \not\subseteq S_i$. Since $S_i$ is a suffix of $Q_{i,1}\cat\cdots\cat Q_{i,n-1}\cat D_i$, if $\ell(i)\ge0$, then $S_i$ contains $D_i$ and the last $\ell(i)$ cells of $Q_{i,1}\cat\cdots\cat Q_{i,n-1}$. We will show that $\ell(i) \ge 0$ for all $i \in N$, i.e., that no old bundle is cut.

\paragraph{Higher levels are preferred.}
We claim that, for all $i,j\in N$ with $\ell(i)\ge\ell(j)+2$,
\begin{equation}\label{eq:levels}
V_i(S_i)>V_i(S_j).
\end{equation}
Fix such $i,j$. When agent $i$ picked $Q_{i,r-1}$, the cell $Q_{j,r}$ was still available, so $V_i(Q_{i,r-1})\ge V_i(Q_{j,r})$ for every $2\le r \le n-1$. Since $S_j$ is contained in the union of $D_j$ and the last $\ell(j)+1$ chosen cells,
\begin{align*}
V_i(S_j)
&\le V_i(D_j)+\sum_{r=n-\ell(j)-1}^{n-1}V_i(Q_{j,r})\\
&\le V_i(D_j)+\sum_{r=n-\ell(j)-1}^{n-1}V_i(Q_{i,r-1})
\tag{Round robin}\\
&= V_i(D_j)+\sum_{r=n-\ell(j)-2}^{n-2}V_i(Q_{i,r})
\tag{Re-indexing}\\
&\le V_i(D_j)+\sum_{r=n-\ell(i)}^{n-2}V_i(Q_{i,r}).
\tag{$\ell(i)\ge\ell(j)+2$}
\end{align*}
When agent $i$ picked her last cell, the available cells were together worth at least $\frac12 V_i(R)$ to her, and there were fewer than $1/(2\eps)$ of them, as shown above. Hence $V_i(Q_{i,n-1})>\eps V_i(R)$. Therefore, $V_i(D_i)+V_i(Q_{i,n-1}) > V_i(D_i)+\eps V_i(R)$, which is at least $V_i(D_j)$, by the hypothesis of~\Cref{lem:repair main}. Combining this with the preceding bound on $V_i(S_j)$ we have $V_i(S_j)< V_i(D_i)+\sum_{r=n-\ell(i)}^{n-1}V_i(Q_{i,r}) \le V_i(S_i)$, where the last inequality uses the fact that $S_i$ contains $D_i$ and the last $\ell(i)$ chosen cells. This proves~\eqref{eq:levels}.

\paragraph{No old bundle is cut.}
Suppose, towards a contradiction, that some old bundle is cut, so some piece has level $-1$. By property (i) of~\Cref{lem:permute}, some piece is allocated whole and therefore has level $n-1$. Since there are $n$ pieces and $n+1$ possible levels, some $\lambda\in\{0,\dots,n-2\}$ is not the level of any piece. Let $H=\{i\in N:\ell(i)>\lambda\}$ and $L=\{i\in N:\ell(i)<\lambda\}$ be the sets of agents whose corresponding pieces $S_i$ have levels above and below $\lambda$, respectively. These sets are nonempty and partition $N$. Furthermore, for all $i\in H$ and $j\in L$, $\ell(i)\ge\ell(j)+2$, and therefore, by Inequality~\eqref{eq:levels}, $V_i(S_i)>V_i(S_j)$. Since $S_i$ is held by some agent, by the envy-freeness of $W$ (\Cref{lem:permute}) we have that $V_i(W_i)\ge V_i(S_i)$; so, overall, for all $i\in H$ and $j\in L$, $V_i(W_i) > V_i(S_j)$.

Thus, no agent $i\in H$ can receive a piece $S_j$ with $j \in L$, since $W_i=S_j$ would contradict this strict inequality. Therefore, every agent $i\in H$ receives a piece $S_j$ with $j\in H$. There are exactly $|H|$ such pieces, so the agents in $H$ hold all of them, and agents in $L$ hold pieces whose level is below $\lambda$. 
Each piece $S_k$ held by an agent in $L$ is a proper suffix of $P_k$: otherwise $S_k=P_k$, which would give $\ell(k)=n-1$, whereas $\ell(k)<\lambda\le n-2$. We can therefore apply property (ii) of~\Cref{lem:permute}: there exist agents $i\in H$ and $j\in L$ such that $V_i(W_j)=V_i(W_i)$. Since agent $j$ belongs to $L$, she receives $W_j=S_k$ for some $k\in L$. However, since $i\in H$ and $k\in L$, $V_i(W_i)>V_i(S_k)=V_i(W_j)$, contradicting $V_i(W_j)=V_i(W_i)$.
\end{proof}

\section{Implementing \Permute}\label{sec:permute}

In this section, we implement \Permute{} and prove~\Cref{lem:permute}. \Permute{} takes as input a set $N$ of $n$ agents and $n$ pairwise disjoint pieces $P_1,\dots,P_n$ (some possibly empty), and outputs a partial envy-free allocation $Y$ with residue $R$, in which every agent receives a suffix of a different input piece, with the additional guarantees stated in~\Cref{lem:permute}.

The protocol tries every ordering $i_1,\dots,i_n$ of the agents and every ordering $P_{j_1},\dots,P_{j_n}$ of the input pieces. For each guess, we process the agents in that order. Agent $i_r$ starts with piece $P_{j_r}$ whole; whenever an earlier agent envies it, we remove just enough cake from its ``left'' to eliminate that envy, keeping the remaining suffix. When all agents have been processed, we test whether the resulting allocation is envy-free. If so, we return it; otherwise, we try the next guess. 

\begin{algorithm}[H]
\caption{$\Permute(N,(P_i)_{i \in N})$, where $n=|N|$}\label{alg:permute}
\ForEach{ordering $i_1,\dots,i_n$ of the agents and ordering $P_{j_1},\dots,P_{j_n}$ of the input pieces}{
  \For{$r=1,\dots,n$}{
    $Y_{i_r}\gets P_{j_r}$\;
    \For{$s=1,\dots,r-1$}{
      \If{$V_{i_s}(Y_{i_r})>V_{i_s}(Y_{i_s})$}{
        Remove from $Y_{i_r}$ its smallest prefix worth $V_{i_s}(Y_{i_r})-V_{i_s}(Y_{i_s})$ to agent $i_s$\;\label{permute:trim}
      }
    }
  }
  \If{$Y$ is envy-free}{
    \Return{$Y$ and its residue $\left(\bigcup_{i \in N} P_i\right)\setminus\left(\bigcup_{i\in N}Y_i\right)$}\;\label{permute:return}
  }
}
\end{algorithm}

We call the allocation constructed for a fixed pair of orders, before the envy-freeness test, its \emph{candidate}. A candidate piece keeps as much cake as possible, including any zero-valued parts that need not be removed. Therefore, the candidate piece $Y_{i_r}$ contains every suffix $T$ of $P_{j_r}$ satisfying
\begin{equation}\label{eq:permute-caps}
V_{i_s}(T)\le V_{i_s}(Y_{i_s})
\qquad\text{for every }s<r.
\end{equation}
Indeed, $Y_{i_r}$ starts as the whole piece $P_{j_r}$. Whenever an earlier agent causes a trim, removing the smallest prefix preserves every suffix satisfying that agent's inequality. Thus every suffix satisfying all the constraints in~\eqref{eq:permute-caps} survives all the trims. We restate and prove~\Cref{lem:permute}.

\permutelemma*

\begin{proof}[Proof of~\Cref{lem:permute}]
For a successful guess, the bijection in~\Cref{lem:permute} is given by $\pi(i_r)=j_r$ for $r=1,\dots,n$. As long as we can show that some guess succeeds, we immediately have that every returned allocation is envy-free. Furthermore, each agent $i_r$ receives a suffix of a different input piece $P_{j_r}$, and the first agent $i_1$ receives $P_{j_1}$ whole, proving property~(i). For property~(ii), let $S$ be a nonempty set of agents, each of whom receives a proper suffix of her assigned input piece, and let $i_r$ be the first agent in $S$ in the guessed order. Since $Y_{i_r}\ne P_{j_r}$, some earlier agent caused a trim of $Y_{i_r}$; let $i_s$ be the last such agent. That trim makes $V_{i_s}(Y_{i_r})=V_{i_s}(Y_{i_s})$, and neither piece changes afterward. Since $s<r$ and $i_r$ is the first agent in $S$, we have $i_s\notin S$, as required.
Finally, for the query count, there are $(n!)^2$ guesses and each guess makes at most one prefix cut for each pair of positions $s<r$ (so at most $\binom n2$ prefix cuts). All other operations use recorded values (see~\Cref{sec:conventions}). 

It remains to prove that some guess succeeds. First, we show that, for any fixed guess of orders, the corresponding candidate ``contains'' every envy-free allocation in which each agent receives a suffix of her assigned input piece. Second, for every $\delta>0$, we construct a $\delta$-envy-free allocation which ``contains'' some candidate. Finally, we show that, if every candidate failed, every allocation produced by this construction would have envy at least some fixed positive amount (independent of $\delta$). Choosing $\delta$ smaller than this amount gives a contradiction.

\paragraph{Every envy-free suffix allocation lies inside its candidate.}
We prove the following claim.
\begin{claim}\label{claim:containment}
Fix an agent order $i_1,\dots,i_n$, an input-piece order $P_{j_1},\dots,P_{j_n}$, and let $Y$ be the corresponding candidate. Suppose that $Z$ is some envy-free allocation such that every $Z_{i_r}$ is a suffix of $P_{j_r}$. Then $Z_{i_r}\subseteq Y_{i_r}$ for all $r = 1, \dots, n$.
\end{claim}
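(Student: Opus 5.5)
We argue by strong induction on the position $r$ that $Z_{i_r}\subseteq Y_{i_r}$. We rely on the observation stated just before the claim. For every $r$, the candidate piece $Y_{i_r}$ contains every suffix $T$ of $P_{j_r}$ satisfying the constraints~\eqref{eq:permute-caps}, namely $V_{i_s}(T)\le V_{i_s}(Y_{i_s})$ for all $s<r$. It therefore suffices to verify that $T=Z_{i_r}$ satisfies these constraints, given the induction hypothesis for all earlier positions.

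The base case $r=1$ is immediate. There are no constraints, so $Y_{i_1}=P_{j_1}$, and $Z_{i_1}\subseteq P_{j_1}$ because $Z_{i_1}$ is a suffix of $P_{j_1}$.

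For the inductive step, fix $r$ and suppose $Z_{i_s}\subseteq Y_{i_s}$ for all $s<r$. For each $s<r$ we chain three facts. First, $Z$ is envy-free, so $V_{i_s}(Z_{i_r})\le V_{i_s}(Z_{i_s})$. Second, by monotonicity of the (nonnegative) valuations and the induction hypothesis, $V_{i_s}(Z_{i_s})\le V_{i_s}(Y_{i_s})$. Combining these, $V_{i_s}(Z_{i_r})\le V_{i_s}(Y_{i_s})$ for every $s<r$. Third, $Z_{i_r}$ is a suffix of $P_{j_r}$, so it is a suffix satisfying all the constraints in~\eqref{eq:permute-caps}, and hence $Z_{i_r}\subseteq Y_{i_r}$.

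One point deserves care: the values $V_{i_s}(Y_{i_s})$ are fixed by the time $Y_{i_r}$ is built. In the algorithm, $Y_{i_s}$ for $s<r$ is never modified after its own iteration, since trims only act on the currently processed piece. So the caps in~\eqref{eq:permute-caps} are exactly the final candidate values, and the induction hypothesis applies to the same sets.

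The main subtlety is the justification of the observation itself, which the paper sketches. Each trim removes the smallest prefix worth the required excess to agent $i_s$. Any suffix $T$ with $V_{i_s}(T)\le V_{i_s}(Y_{i_s})$ is disjoint from some prefix of the current piece of value at least that excess. By minimality, that prefix contains the removed prefix. Suffixes are nested along the interval order of the piece, so $T$ survives the trim. If needed, I would spell this out using the fact that the suffixes of a fixed piece are totally ordered by inclusion.
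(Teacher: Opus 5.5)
Your proof is correct and matches the paper's argument: induction on $r$, where envy-freeness of $Z$ and the induction hypothesis give $V_{i_s}(Z_{i_r})\le V_{i_s}(Z_{i_s})\le V_{i_s}(Y_{i_s})$ for all $s<r$, so $Z_{i_r}$ satisfies the caps in~\eqref{eq:permute-caps} and therefore survives every trim. Your additional justification of the survival observation is a correct expansion of the remark the paper makes just before the claim: the part of the current piece outside $T$ is a prefix worth at least the excess, so it contains the smallest prefix that is removed.
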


\begin{proof}
We will prove the statement by induction on $r$. In $Y$, the first agent receives her input piece whole, so the claim holds for $r=1$. For a later position $r$, for every $s<r$, envy-freeness of $Z$ and the induction hypothesis give $V_{i_s}(Z_{i_r}) \le V_{i_s}(Z_{i_s}) \le V_{i_s}(Y_{i_s})$. Thus $Z_{i_r}$ satisfies all the constraints in~\eqref{eq:permute-caps}, and therefore $Z_{i_r}\subseteq Y_{i_r}$.
\end{proof}

\paragraph{Creating a $\delta$-envy-free suffix allocation.}
Fix some $\delta>0$, and partition each input piece $P_i$ into finitely many consecutive cells, each worth at most $\delta$ to every agent. Such cells exist: e.g., as in~\Repair{}, we can ask each agent to mark the points that split each piece into $\lceil1/\delta\rceil$ parts of equal value to her, and cut at all the resulting marks (observing that each $P_i$ is worth at most $1$ to each agent; a piece worth zero to an agent needs no marks from that agent). Note that this construction and the process below are used only in the proof; they are not steps of \Permute{} and do not contribute to its query count.

Match the agents to the input pieces arbitrarily, and let $X$ be the allocation that gives each agent the empty suffix of her input piece, i.e., $X_i=\emptyset$ for every agent $i$; throughout, each suffix keeps the label of its input piece, even when it is empty. We run a version of the envy-cycle elimination algorithm of Lipton et al.~\cite{lipton2004approximately}, in which the ``items'' are the cells, and the only item that can be given to an agent is the cell immediately preceding the suffix she currently holds. The envy graph of $X$ has an edge $i\to j$ whenever agent $i$ strictly envies agent $j$, i.e., whenever $V_i(X_i)<V_i(X_j)$. Repeat the following:
\begin{enumerate}[leftmargin=*]
\item If the envy graph has a directed cycle, give each agent on the cycle the suffix held by the next agent on the cycle (labels follow suffixes).
\item Otherwise, if some agent that nobody strictly envies holds a suffix that is not the whole input piece from which her current suffix comes, add to her suffix the cell immediately preceding it.
\item Otherwise, stop.
\end{enumerate}

First, observe that this process terminates. Each enlargement allocates a cell that was not allocated before, and there are finitely many cells, so there are finitely many enlargements. During any sequence of consecutive cycle rotations, the pieces do not change, so $\sum_{i\in N}V_i(X_i)$ takes at most $n!$ values. Every rotation strictly increases this sum, so every such sequence is finite. At termination, the envy graph is acyclic (otherwise step (1) would apply), and every agent holding a non-whole suffix is envied by someone (otherwise step (2) would apply).
Second, throughout this process, $X$ is $\delta$-envy-free, i.e., $V_i(X_i)\ge V_i(X_j)-\delta$ for all $i, j \in N$. This holds initially because all allocated pieces are empty. A cycle rotation strictly improves each participating agent's own value and leaves the collection of pieces unchanged, so it preserves $\delta$-envy-freeness. An enlargement adds a cell worth at most $\delta$ to every agent to a piece that nobody envied, so it also preserves $\delta$-envy-freeness.

\paragraph{$X$ contains a candidate.}
Let $X$ be the allocation at termination. Since its envy graph is acyclic, order the agents in a topological order $i_1,\dots,i_n$ (i.e., every envy-edge goes from an earlier agent to a later agent). For each position $r$, let $j_r$ be the label of the input piece whose suffix agent $i_r$ holds. These labels give an ordering $P_{j_1},\dots,P_{j_n}$ of the input pieces. Let $Y$ be the candidate produced by \Permute{} for these two orders. We will show, by induction on $r$, that $Y_{i_r}\subseteq X_{i_r}$ for all $r = 1, \dots, n$.

For $r=1$, agent $i_1$ has no incoming edge in the envy graph, so nobody envies her. Since the process has terminated, she must hold a whole input piece (otherwise step~(2) would apply), i.e., $X_{i_1}=P_{j_1}$. The candidate $Y$ also gives the first agent her input piece whole, so $Y_{i_1}=P_{j_1}=X_{i_1}$.
Now let $r>1$ and suppose that $Y_{i_s}\subseteq X_{i_s}$ for every $s<r$.  If $X_{i_r}=P_{j_r}$, the claim is immediate. Otherwise, $i_r$ holds a non-whole suffix in $X$, and therefore some agent $i_s$ envies $X_{i_r}$. By our choice of the agent order, $s<r$ and the induction hypothesis gives $Y_{i_s}\subseteq X_{i_s}$. Therefore,
\[
V_{i_s}(Y_{i_r})
\le V_{i_s}(Y_{i_s})
\le V_{i_s}(X_{i_s})
< V_{i_s}(X_{i_r}),
\]
where the first inequality holds because an earlier agent never envies a later agent in a candidate $Y$. Since $X_{i_r}$ and $Y_{i_r}$ are suffixes of the same input piece $P_{j_r}$, one contains the other; the strict inequality rules out $X_{i_r}\subseteq Y_{i_r}$. Therefore $Y_{i_r}\subsetneq X_{i_r}$, completing the induction.

\paragraph{Some candidate is envy-free.}
Fix a guess with orders $i_1,\dots,i_n$ and $P_{j_1},\dots,P_{j_n}$, and let $Y$ be the corresponding candidate.  An \emph{extension} of $Y$ is an allocation $X$ in which every $X_{i_r}$ is a suffix of $P_{j_r}$ containing $Y_{i_r}$. Suppose that candidate $Y$ is not envy-free.

First, no extension of $Y$ is envy-free either: by~\Cref{claim:containment}, an envy-free extension $X$ would satisfy $X_i\subseteq Y_i$ for every agent $i$, and hence $X=Y$, a contradiction. Second, as $X$ ranges over the extensions of $Y$, the length (the sum of the lengths of its component intervals) of each suffix $X_{i_r}$ ranges from the length of $Y_{i_r}$ to that of $P_{j_r}$. Thus, the possible vectors of suffix lengths form a closed bounded box in $\mathbb R^n$. Let $e(X)=\max_{i,j\in N}\bigl(V_i(X_j)-V_i(X_i)\bigr)$ be the maximum envy in an allocation $X$. By nonatomicity, every $V_i(X_j)$ is continuous in the length of $X_j$, and therefore $e(X)$ is continuous in these suffix lengths. A continuous function on a closed, bounded box attains its minimum. Since no extension is envy-free, this minimum is strictly positive:
\[
\gamma_Y=\min_{X\text{ extending }Y}e(X)>0.
\]
Therefore, every extension of the failed candidate $Y$ has envy at least $\gamma_Y$.

Finally, suppose, towards a contradiction, that every candidate fails. There are finitely many guesses, so $\gamma=\min_Y\gamma_Y>0$, where the minimum ranges over all candidates. Apply the preceding construction with $\delta=\gamma/2$. It produces a $\delta$-envy-free allocation $X$ containing the candidate $Y$ of some guess, with the same assignment of input pieces. Thus $X$ is an extension of $Y$, but $e(X)\le\delta<\gamma\le\gamma_Y$,
a contradiction. Therefore some guess succeeds, completing the proof.
\end{proof}

\bibliographystyle{alpha}
\bibliography{refs}

\end{document}